%% file: main.tex
\documentclass[runningheads,a4paper]{llncs}

\usepackage[english]{babel}
\usepackage[T1]{fontenc}
\usepackage{lmodern}
\usepackage{amsmath}
\usepackage{amssymb}
\usepackage{booktabs}
\usepackage[dvipsnames]{xcolor}
\usepackage{graphicx}
\usepackage{placeins}
\usepackage{tikz}
\usepackage{xspace}
\usetikzlibrary{arrows.meta,backgrounds,calc,positioning,quotes,shapes.geometric}
\usepackage{tabularx}
\usepackage{float}
\usepackage[expansion=false]{microtype}
\usepackage[switch,mathlines]{lineno}
\usepackage[
  colorlinks=true,
  allcolors=blue
]{hyperref}
\hypersetup{
    pdftitle={On the Parameterized Complexity of Coloring Discovery},
    pdfauthor={Eric Decker and Sebastian Siebertz},
}
\usepackage[dvipsnames]{xcolor}
\usepackage[nameinlink]{cleveref}
\usepackage[all,defaultlines=3]{nowidow}

\newcommand{\pname}[1]{\textnormal{\textsc{#1}}\xspace}
\newcommand{\CD}{\pname{Coloring Discovery}}
\newcommand{\EC}{\pname{Equitable Coloring}}
\newcommand{\LC}{\pname{List Coloring}}
\newcommand{\FlipCD}{\pname{Flip-CD}}
\newcommand{\SwapCD}{\pname{Swap-CD}}
\newcommand{\SlideCD}{\pname{Slide-CD}}
\newcommand{\VC}{\operatorname{vc}}
\newcommand{\DTC}{\operatorname{dtc}}
\newcommand{\FVS}{\operatorname{fvs}}
\newcommand{\TD}{\operatorname{td}}
\newcommand{\BW}{\operatorname{bw}}
\newcommand{\DDP}{\operatorname{ddp}}
\newcommand{\diam}{\operatorname{diam}}
\newcommand{\Znonneg}{\mathbb{Z}_{\ge 0}}
\newcommand{\cO}{\mathcal{O}}
\newcommand{\FPT}{\mathsf{FPT}}
\newcommand{\Wone}{\mathsf{W[1]}}
\newcommand{\paraNP}{\mathsf{para\text{-}NP}}
\newcommand{\NP}{\mathsf{NP}}
\newcommand{\red}{\textsf{r}}
\newcommand{\green}{\textsf{g}}
\newcommand{\blue}{\textsf{b}}

\tikzset{
  vertex/.style={draw,circle,inner sep=0pt,minimum size=3mm},
  small vertex/.style={draw,circle,inner sep=0pt,minimum size=2mm},
  mwpm node/.style={draw,rectangle,minimum size=5mm,font=\footnotesize},
  text node/.style={anchor=west,font=\small},
  every edge quotes/.style={font=\scriptsize,fill=white,inner sep=0pt}
}

\begin{document}

\title{On the Parameterized Complexity of Coloring Discovery}
\titlerunning{The Parameterized Complexity of Coloring Discovery}

\author{Eric Decker and Sebastian Siebertz}
\authorrunning{E. Decker and S. Siebertz}
\institute{University of Bremen, Bremen, Germany}

\maketitle
% \linenumbers

\begin{abstract}
Coloring Discovery asks whether a possibly improper initial coloring can be made proper within a prescribed number of allowed changes. 
We study the parameterized complexity of three modification step models that were studied previously in the literature: recoloring one vertex (color flipping), swapping the colors of arbitrary vertices (color swapping), and swapping colors only across an edge (color sliding). For color flipping, we give exact fixed-parameter algorithms for the parameters vertex cover and distance to complete. 
For color swapping, we obtain fixed-parameter tractability for the parameter vertex cover plus the number of colors. 
Our lower bounds show $\Wone$-hardness for treedepth plus feedback vertex set in the color flipping model and for the number of colors plus bandwidth or distance to disjoint paths in the swapping and sliding models.
All three variants remain $\NP$-complete with four colors on graphs of diameter two.
\keywords{Solution discovery \and parameterized complexity \and graph coloring repair \and structural parameters}
\end{abstract}

\section{Introduction}

Several established algorithmic models study how existing solutions can be
adapt\-ed rather than computed from scratch. In combinatorial reconfiguration,
one is given two feasible solutions and asks whether one can be transformed
into the other by a sequence of small modifications, usually requiring every
intermediate configuration to remain feasible~\cite{bousquet2024survey,vandenheuvel2013complexity}.
Local search explores a prescribed neighborhood of a current solution, aiming to find a solution with improved costs~\cite{aarts1997local,fellows12localsearch}.
Reoptimization starts with a solution to an original instance and studies how
this information can be exploited after the instance changes~\cite{boeckenhauer2018reoptimization,schieber18reopti}.
In recoverable robust optimization, an initial solution is chosen so that,
after an uncertain scenario is revealed, it can be turned into a feasible
solution using only limited recovery actions~\cite{liebchen2009recoverable}.

Many graph coloring applications have this dynamic character. Consider an examination timetable that has already been published. The vertices represent exams, an edge joins two exams that cannot take place simultaneously, and the colors represent time slots. A late change in enrollments may introduce a new conflict between two exams that currently occupy the same slot, so that the existing timetable is no longer feasible. Computing an arbitrary proper coloring from scratch may reschedule a large part of the timetable, although the new conflict could perhaps be resolved by moving only one or two exams. A similar situation occurs in frequency assignment: adding a transmitter, moving an existing one, or increasing its broadcasting range may create interference in an otherwise functioning network. Retuning many devices is costly, and one would prefer to restore feasibility by a small number of controlled changes. In such situations, the initial assignment is part of the problem and cannot simply be discarded~\cite{debiasi2019restoring,fellows2026solution,garnero2018fixing}.

For graph coloring, several concrete modification step models have been studied. 
Color-Fixing asks whether a possibly improper coloring can be made proper by recoloring a bounded number of vertices. Its restriction to a fixed color palette of size $r$ is called $r$-Fix~\cite{garnero2018fixing,debiasi2019restoring}. The $r$-Swap problem instead permits exchanges of the colors of arbitrary pairs of vertices~\cite{debiasi2019restoring}. In both models, the target coloring is not prescribed.

Two fixed-target models are also closely related. Classical coloring reconfiguration starts and ends with prescribed proper colorings and requires every intermediate coloring to remain proper.
Representative results include the boundary between three and four colors~\cite{bonsma2009paths,cereceda2011paths}, bounded-length recoloring~\cite{bonsma2014bounded}, and reconfiguration under adjacent color swaps, which we refer to as color sliding~\cite{fuchs2025swapping}. Token Swapping and Colored Token Swapping use the same local exchange operation as color sliding, but have a prescribed target arrangement: an individual destination for every token in Token Swapping, and a target color for every vertex in Colored Token Swapping~\cite{bonnet2018tokens,yamanaka2015labeled,yamanaka2018colored}.

These lines of work lead to a natural question: starting from the current, possibly infeasible state, can we reach some feasible state by a bounded number of modification steps? Solution discovery offers a general framework for this question~\cite{fellows2026solution}. A solution discovery problem specifies a base problem and a set of permitted modification steps. A problem instance comprises an instance of the base problem, an initial configuration, and a budget. The target configuration is not fixed and the initial and intermediate configurations need not be feasible. The task is to find any feasible configuration that can be reached within the budget. Thus, solution discovery is based on the local modification rules familiar from reconfiguration, but replaces the prescribed target by the objective of repairing the current state. 
Fellows et al. introduced this framework and studied its coloring, vertex-cover, independent-set, and dominating-set variants~\cite{fellows2026solution}. Further work has investigated solution-discovery variants of problems solvable in polynomial time~\cite{grobler2024problemsinp}, kernelization~\cite{grobler2024kernelization}, logical meta-theorems~\cite{bousquet2025metatheorems}, path problems in which feasibility and movement live on different graphs~\cite{vonbergen2026path}, and first-order value objectives~\cite{gerhard2026fovalue}. Saito et al. consider further discovery variants of classical vertex-subset problems~\cite{saito2026solution}.

In this paper we study \CD under the three modification step models described above: color flipping, color swapping, and color sliding. The flip rule coincides with Color-Fixing (and, for a fixed palette of size $r$, with $r$-Fix), while the swap rule coincides with $r$-Swap. We are given a graph $G$, a possibly improper $q$-coloring $\varphi$, and a budget $b$.

We use $\FPT$ and $\NP$ for fixed-parameter tractability and nondeterministic polynomial time, respectively. $\Wone$ and $\paraNP$ denote the standard parameterized complexity classes and the ETH refers to the Exponential Time Hypothesis. The parameters $\VC(G)$, $\DTC(G)$, $\operatorname{tw}(G)$, $\TD(G)$, $\FVS(G)$, $\BW(G)$, $\DDP(G)$, and $\diam(G)$ denote, respectively, the vertex-cover number, distance to complete, treewidth, treedepth, feedback vertex-set number, bandwidth, distance to disjoint paths, and diameter of $G$. Finally, $n=|V(G)|$ and $m=|E(G)|$ denote the number of vertices and edges, $q$ the number of colors, and $b$ the budget. Recall that a problem is fixed-parameter tractable with respect to a parameter $k$ if an encoded instance~$\mathcal I$ can be solved in time $f(k)|\mathcal I|^{\cO(1)}$ for some computable function $f$. Unless $\FPT=\Wone$, a $\Wone$-hard problem is not fixed-parameter tractable.
\medskip

Before the present work, the following results were known.

\begin{itemize}
  \item Garnero et al. prove that the flip variant is polynomial-time solvable
  for $q\le2$ and, for every fixed $q\ge3$, $\NP$-complete even on planar
  bipartite graphs. When $q$ is part of the input, it is $\Wone$-hard
  parameterized by the budget. They give fixed-parameter algorithms
  parameterized by $q+b$ and by $q+\operatorname{tw}(G)$
  ~\cite{garnero2018fixing}.
  \item De Biasi and Lauri prove that, for every fixed $q\ge3$, the swap
  variant is $\NP$-complete even on planar bipartite graphs and $\Wone$-hard
  parameterized by the budget. When $q$ is part of the input, both the flip and
  swap variants are $\Wone$-hard parameterized by treewidth alone
  ~\cite{debiasi2019restoring}.
  \item Fellows et al. introduce the slide variant of \CD and give
  polynomial-time algorithms for the swap and slide variants with two colors.
  They extend the planar-bipartite $\NP$-completeness result to the slide
  variant through a unified proof for all three variants, give fixed-parameter
  algorithms on structurally nowhere dense classes parameterized by $q+b$,
  and prove that, on general graphs, the slide variant is $\Wone$-hard
  parameterized by $q+b$ and by treewidth~\cite{fellows2026solution}.
\end{itemize}

\paragraph{Our results.}
 \Cref{tab:results} summarizes our main new results.

\begin{table}[H]
\caption{Main complexity results proved in this paper. }
\label{tab:results}
\centering
\small
\begin{tabularx}{\linewidth}{
    @{}
    l
    l
    >{\raggedright\arraybackslash}X
    @{}
}
\toprule
Operation & Parameter & Result \\
\midrule
flip
    & $k=\VC(G)$
    & $\FPT$, 
      $2^{\cO(k\log k)}(n+m)$ \\

flip
    & $k=\DTC(G)$
    & $\FPT$, $2^{\cO(k^2)}n^3$ \\

swap
    & $\VC(G)+q$
    & $\FPT$ \\

flip
    & $\TD(G)+\FVS(G)$
    & $\Wone$-hard \\

swap, slide
    & $q+\BW(G)$
    & $\Wone$-hard \\

swap, slide
    & $q+\DDP(G)$
    & $\Wone$-hard\\

all three
    & $q+\diam(G)$
    & $\NP$-complete already on connected graphs with
      $q=4$ and $\diam(G)=2$ \\

all three
    & $k=\VC(G)$
    & no universal $\theta^k n^{\cO(1)}$-time algorithm for all
      fixed $q$, unless ETH fails \\
\bottomrule
\end{tabularx}
\end{table}

For the flip variant, we give exact fixed-parameter algorithms parameterized by vertex cover and by distance to complete. For $k=\VC(G)$, the running time is $2^{\cO(k\log k)}(n + m)$, and for $k=\DTC(G)$, it is $2^{\cO(k^2)}n^3$. For the swap variant, we prove fixed-parameter tractability parameterized by $\VC(G)+q$. The parameterized complexity of the slide variant for $\VC(G)+q$ remains open.

For the lower bounds, we prove that the flip variant is $\Wone$-hard parameterized by $\TD(G)+\FVS(G)$. Both the swap and slide variants are $\Wone$-hard parameterized by $q+\BW(G)$ and by $q+\DDP(G)$. 
All three variants remain $\NP$-complete on connected graphs of diameter two with four colors. Consequently, they are $\paraNP$-complete parameterized by $q+\diam(G)$, and also by $q$ together with either the radius or the domination number. Finally, unless the Exponential Time Hypothesis fails, none of the three variants admits a constant single-exponential base $\theta$ such that, for all fixed values of $q$, \CD can be solved in time $\cO\big(\theta^{\VC(G)} n^{\cO(1)}\big)$.

Taken together, our results augment the parameterized-complexity landscape of Coloring Discovery and settle the complexity of the three movement rules for many natural structural graph parameters.

\section{Preliminaries}

All graphs are finite, undirected, loopless, and simple. We assume throughout that for a graph $G$, $n=|V(G)|\ge1$ and write $m=|E(G)|$. For a positive integer~$q$, write $[q]=\{1,\ldots,q\}$. A $q$-coloring of $G$ is any map $\varphi:V(G)\to[q]$. It need not use every color and need not be proper. It is proper if $\varphi(u)\ne\varphi(v)$ for every $uv\in E(G)$. The color histogram of $\varphi$ is the vector $(|\varphi^{-1}(1)|,\ldots,|\varphi^{-1}(q)|)$.
In running times we will tacitly assume that the integer variable $k$ is at least $2$ so that $\log k\geq 1$.

An instance of \CD is a tuple $(G,q,\varphi,b)$, where $b$ is a nonnegative integer budget. A flip replaces the current color $c$ of any vertex $v$ by an arbitrary color $c' \in [q] \setminus \{c\}$. A swap exchanges the colors of arbitrary distinct vertices $u,v$.
We may disregard swaps between equally colored vertices as they do not change the coloring. A slide is a swap for which $uv\in E(G)$.
The question is whether at most~$b$ permitted moves lead to a proper $q$-coloring.
We write \mbox{\FlipCD}, \mbox{\SwapCD}, and \SlideCD for the three variants. Only the target coloring must be proper.

A vertex cover is a set $C \subseteq V(G)$ such that every edge in $E(G)$ has at least one endpoint in $C$, and $\VC(G)$ is its minimum size. The distance to complete $\DTC(G)$ is the minimum size of a set $S$ such that $G-S$ is a complete graph. The feedback vertex-set number $\FVS(G)$ is the minimum size of a set $F \subseteq V(G)$ such that $G - F$ is a forest. The closure of a rooted forest joins each vertex to all of its ancestors. The treedepth $\TD(G)$ is the minimum, over all rooted forests whose closure contains $G$, of the maximum number of vertices on a root-to-leaf path. The bandwidth is $\BW(G)=\min_\pi\max_{uv\in E(G)}|\pi(u)-\pi(v)|$, where $\pi$ ranges over bijections $V(G)\to[|V(G)|]$, and we define $\BW(G) = 0$ for an edgeless graph. Finally, $\DDP(G)$ is the minimum size of a set whose deletion leaves a disjoint union of paths. For a connected graph $G$, its diameter is the largest distance between two vertices and its radius is the minimum, over all vertices, of the largest distance to another vertex. The domination number is the minimum size of a set of vertices whose closed neighborhoods cover $V(G)$.

We use standard notions from parameterized complexity. A parameterized reduction maps an instance $(\mathcal I,k)$ in time $f(k)|\mathcal I|^{\cO(1)}$ to an equivalent instance $(\mathcal I',k')$ such that $k'\le g(k)$ for some computable functions $f$ and $g$. For integer programming, we use Lenstra's theorem: an integer linear program
(ILP) with~$p$ integer variables and binary encoding length $L$ can be solved
in $f(p)L^{\cO(1)}$ time~\cite{lenstra1983integer}.

\medskip
We will use the following reachability statement between colorings. 
The componentwise reachability criterion for slides is standard in Colored Token Swapping~\cite[discussion preceding Lemma~1]{yamanaka2018colored}. 
The arbitrary-swap criterion is its complete-graph special case. The diameter-dependent bound follows from a classical bound for transposition graphs~\cite[Theorem~1.4.12]{prudden1982transposition}.

\begin{lemma}\label{lem:reachability}
Let $\varphi,\psi:V(G)\to[q]$ be colorings of an $n$-vertex graph.
\begin{enumerate}
\item The coloring $\psi$ is reachable from $\varphi$ by swaps if and only if $|\varphi^{-1}(c)|=|\psi^{-1}(c)|$ for every $c\in[q]$. If it is reachable, $n-1$ swaps suffice.
\item The coloring $\psi$ is reachable from $\varphi$ by slides if and only if these equalities hold separately in every connected component. If they hold, at most \mbox{$\sum_K (|V(K)|-1)$}$(2\diam(K)-1)$ slides suffice, where isolated components contribute zero.
\end{enumerate}
\end{lemma}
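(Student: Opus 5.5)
Both parts rest on one observation: swaps and slides only permute the colors, so histograms are invariant. For part~1 this gives necessity at once, since every swap preserves $|\varphi^{-1}(c)|$ for all $c$. For part~2, a slide exchanges colors across an edge, whose endpoints lie in the same component, so the histogram of every component is invariant and the componentwise equalities are necessary.

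For sufficiency in part~1, I would fix an arbitrary order $v_1,\dots,v_n$ of $V(G)$ and build $\psi$ greedily. At step $i$ for $i<n$, suppose $v_1,\dots,v_{i-1}$ already have their target colors. If $v_i$ does not yet have color $\psi(v_i)$, then, because the current coloring and $\psi$ have equal histograms and agree on $v_1,\dots,v_{i-1}$, some $v_j$ with $j>i$ currently carries $\psi(v_i)$, and I swap $v_i$ with that $v_j$. After $n-1$ steps only $v_n$ remains unfixed, and equality of histograms forces its color to be correct. This gives at most $n-1$ swaps.

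For part~2, I would treat each component $K$ separately, since slides inside different components do not interact. Isolated vertices need nothing. The key fact is that a transposition $(u\,v)$ of the colors of two vertices at distance $d$ in $K$ can be simulated by $2d-1$ slides along a shortest path $u=w_0,\dots,w_d=v$. Concretely, I slide $u$'s color forward along the $d$ edges, which shifts the intermediate colors back by one, and then slide $v$'s original color back along the first $d-1$ edges, which restores the intermediate vertices. Since $d\le\diam(K)$, each arbitrary swap within $K$ costs at most $2\diam(K)-1$ slides.

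Applying part~1 to $K$, which is valid because the histograms agree on $K$, yields at most $|V(K)|-1$ swaps, and therefore at most $(|V(K)|-1)(2\diam(K)-1)$ slides. Summing over components gives the stated bound. The one step needing care is checking that the path simulation really restores the intermediate vertices. I would verify it by tracking positions: after the forward pass, $w_k$ holds the old color of $w_{k+1}$ for $k\ge1$ and $w_d$ holds $u$'s color. The backward pass then moves $w_0$'s new color, which is $w_1$'s old color, and so on. If this bookkeeping goes wrong, a fallback is to cite the transposition-graph bound of~\cite{prudden1982transposition} directly, as the paper indicates.
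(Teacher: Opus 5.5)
Your proof is correct and self-contained. The paper gives no argument of its own: it cites the componentwise slide criterion from Colored Token Swapping~\cite{yamanaka2018colored}, obtains the swap criterion as the complete-graph special case, and takes the diameter-dependent count from Prudden's transposition-graph bound~\cite{prudden1982transposition}.

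You derive things in the opposite direction. You first prove the swap case directly by the greedy, selection-sort argument, which yields sufficiency and the $n-1$ bound at once. You then obtain the slide case by simulating each of the at most $|V(K)|-1$ transpositions inside a component by $2d-1\le 2\diam(K)-1$ slides along a shortest path. This path simulation is exactly the elementary fact behind the cited transposition-graph bound. So your route proves both the characterization and both counts from scratch at little extra length, while the paper's route is shorter but rests entirely on external references.

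One small bookkeeping slip does not affect correctness. After the forward pass, $w_k$ holds the old color of $w_{k+1}$ for every $0\le k<d$, including $k=0$, not only for $k\ge 1$. The backward pass carries $v$'s original color from $w_{d-1}$ down to $w_0$, and as it passes it restores each intermediate vertex $w_k$ with $1\le k\le d-1$. Phrased this way, the verification is immediate.
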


An equitable $q$-coloring is a proper coloring whose color-class sizes differ by at most one~\cite{meyer1973equitable}. If $n=aq+r$ with $0\le r<q$, exactly $r$ classes have size $a+1$ and $q-r$ classes have size $a$. We use the known $\Wone$-hardness of \EC parameterized by $q+\DDP(G)$~\cite{gomes2023structural}. Gomes et al.\ also prove hardness on interval graphs when $q$, treewidth, and maximum degree are taken together as the parameter~\cite{gomes2019equitable}.
Ordering an interval representation by nondecreasing left endpoint gives bandwidth at most the maximum degree: if $uv$ is an edge with the left endpoint of $u$ first, every interval ordered between $u$ and $v$ intersects $u$. Hence \EC is $\Wone$-hard for $q+\BW(G)$ as well.

\section{Algorithms}

\subsection{Color flipping}

For flips, color names that do not occur initially are interchangeable and we may assume $q\le n$. Further details are provided in~\Cref{app:flip-vc}.

\begin{theorem}\label{thm:flip-vc}
Let $k=\VC(G)$. In
$2^{\cO(k\log k)}(n+m)$ time,
one can compute the minimum number of flips needed to reach a proper
coloring, or determine that no proper $q$-coloring exists.
In particular, \FlipCD is fixed-parameter tractable parameterized
by $\VC(G)$.
\end{theorem}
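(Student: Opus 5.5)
We compute a minimum vertex cover $C$ with $|C|\le k$ in $\cO(1.28^k+kn)$ time using a standard FPT algorithm, and set $I=V(G)\setminus C$, which is independent. The idea is to guess how the final coloring $\psi$ looks on $C$ up to renaming of colors. Then, for that guess, we compute the cheapest way to realize it together with a compatible coloring of $I$.

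\textbf{Step 1: guessing the structure on $C$.} We enumerate all partitions $\mathcal P$ of $C$ into blocks that are independent in $G[C]$. There are at most $k^k=2^{\cO(k\log k)}$ of them. Each block is intended to be a color class of $\psi$ restricted to $C$. For a fixed $\mathcal P=\{P_1,\dots,P_t\}$ with $t\le k$, it remains to choose an injective assignment $\sigma:[t]\to[q]$ of actual colors to blocks, and then to color $I$.

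\textbf{Step 2: reducing the cost to a matching problem.} Once $\sigma$ is fixed, a vertex $v\in I$ must avoid the colors $\sigma(P_i)$ for all blocks meeting $N(v)$. Since $v$ has no neighbors in $I$, its optimal behaviour is independent of all other vertices of $I$: it costs $0$ if $\varphi(v)$ is not forbidden, and otherwise $1$ (when $q\ge t+1$, or more precisely when a free color remains). If all $q$ colors are forbidden, the choice is infeasible. Vertices of $C$ cost $[\varphi(u)\ne\sigma(P_i)]$. The total cost is therefore a sum, over pairs (block $i$, color $c$), of
\[
w(i,c)=|\{u\in P_i:\varphi(u)\ne c\}|+|\{v\in I:\varphi(v)=c,\ N(v)\cap P_i\ne\emptyset\}|.
\]
One subtlety remains: a vertex $v\in I$ with $\varphi(v)=c$ adjacent to several blocks is counted once per conflicting block. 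However, at most one block can receive color $c$, so each $v$ is counted at most once and the sum is exact. Infeasibility arises only when $N(v)$ meets at least $q$ blocks, which requires $q\le t\le k$. This case is handled by a global check: if $q\le k$, we additionally verify per vertex that $N(v)$ meets fewer than $q$ blocks. Given this, every $\sigma$ is feasible.

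Minimizing $\sum_i w(i,\sigma(i))$ over injective $\sigma$ is a minimum-weight bipartite matching between $[t]$ and $[q]$. The key point is that only colors with small cost matter. Colors $c$ used by no vertex of $C\cup N(C)$ relevant to the block all have weight $|P_i|+(\text{count of }I\text{-neighbors of }P_i\text{ colored }c)$. So for each block we keep only the $t$ colors of smallest weight, together with one generic unused color if one exists. This leaves $\cO(k^2)$ candidate colors, and the matching is solved in $\mathrm{poly}(k)$ time.

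\textbf{Running time and main obstacle.} Computing all the weights $w(i,c)$ for a fixed $\mathcal P$ naively costs $\cO(n+m)$ per partition, giving $2^{\cO(k\log k)}(n+m)$ overall. The main obstacle I expect is achieving this linear dependence while extracting the $t$ smallest-weight colors per block without a factor $q$. We handle it by scanning the edges between $C$ and $I$ once per partition. During the scan we accumulate counts only for colors that actually occur, using a bucket array indexed by color. This array is allocated once, since $q\le n$, and reset lazily. Colors that do not occur receive the baseline weight $|P_i|-|\{u\in P_i:\varphi(u)=c\}|$. This value is also computed only for colors occurring in $C$; all other colors share the value $|P_i|$. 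The minimum over all partitions $\mathcal P$ and matchings is the answer. Its correctness follows because the true optimum $\psi$ induces some partition $\mathcal P$ and some $\sigma$, and our cost for that pair is a lower bound achieved by a realizable recoloring.
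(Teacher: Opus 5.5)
Your proof is correct and follows the paper's argument: you enumerate the independent-block partitions of a minimum vertex cover, check that every $x\in I$ sees fewer than $q$ blocks, and solve a minimum-cost injective assignment with exactly the paper's weights $w(P,c)$, where each conflicting $I$-vertex is counted once by injectivity. The only deviation is your pruning to the $t$ cheapest colors per block, which is valid because at most $t-1$ of those colors are occupied by other blocks, but which the paper does not need: after its palette reduction to $q\le n$ (an assumption you also use but do not justify), the Hungarian method over all $q$ colors already costs $\cO(k^2q)\subseteq\cO(k^2n)$ per partition.
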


\begin{proof}
Compute a minimum vertex cover $C$ by the standard bounded-search-tree algorithm, and put $I=V(G)\setminus C$. Enumerate the set partitions $\mathcal P$ of $C$. Keep a partition only if every block is independent and $|\mathcal P|\le q$. 
Such a partition represents which vertices of $C$ receive the same target color.
We aim to find an injection $\pi : \mathcal P \to [q]$ that minimizes the number of color flips necessary to reach a proper coloring $\psi$ with $\psi(v) = \pi(P)$ for all $P \in \mathcal{P}, v \in P$. 

For $x\in I$, let $\mathcal N_{\mathcal P}(x)$ be the set of blocks containing a neighbor of $x$. A coloring represented by $\mathcal P$ extends to $I$ if and only if $|\mathcal N_{\mathcal P}(x)|<q$ for every $x\in I$. Indeed, distinct blocks receive distinct colors, while $I$ is independent.

For a block $P$ and a color $c$, define \mbox{$w(P,c)=|P\setminus\varphi^{-1}(c)|+|\{x\in I: \varphi(x)=c,$} $N(x)\cap P\ne\emptyset\}|$.
Thus, $w(P,c)$ counts the flips forced by assigning color~$c$ to the block~$P$: vertices of~$P$ not already colored~$c$ and color-$c$ vertices of~$I$ adjacent to~$P$ must be recolored. 
If $\pi:\mathcal P\to[q]$ is injective, then the minimum number of flips among extensions using color $\pi(P)$ on $P$ is exactly $\sum_{P\in\mathcal P}w(P,\pi(P))$. The first terms count changes on $C$. A vertex of $I$ must change precisely when its initial color is used by a neighboring block. It is counted exactly once because $\pi$ is injective, and, by the extension condition, it can then be given any available color.

It remains to find a minimum-cost injective assignment of blocks to
colors. Writing $r=|\mathcal P|$, this is a rectangular assignment
problem on the complete bipartite graph between the $r$ blocks and the
$q$ colors. Its weights can be computed in $\cO(m+kq)$ time. Since
$r\le q$, a minimum-cost matching saturating all block vertices can be
found in $\cO(r^2q)\subseteq\cO(k^2q)$ time by the Hungarian method for
unbalanced bipartite graphs~\cite{ramshaw2012weight}. There are at most
$(k+1)^k=2^{\cO(k\log k)}$ partitions of $C$. Taking the best over all
partitions proves the theorem.
\end{proof}

\begin{figure}[t]
\centering
\vspace{3mm}
\resizebox{\linewidth}{!}{\input{img/flip-vc-paper}}
\caption{The four stages of the flip algorithm for a fixed vertex cover: the input instance, the computation of $C$, the guessed partition~$\mathcal P$ of $C$ into independent blocks, and the minimum-cost assignment of distinct color names to the blocks. Thick edges show an optimum assignment in the example; the node $\ast$ is a zero-cost dummy used to balance the two sides.}
\label{fig:flip-vc}
\end{figure}

We next consider the parameter distance-to-complete. The algorithm uses the same equality-pattern idea, but the clique left after deleting the modulator requires pairwise distinct colors.

\begin{theorem}\label{thm:flip-dtc}
Let $k=\DTC(G)$. In $2^{\cO(k^2)}n^3$ time, one can compute the minimum number of flips needed to reach a proper coloring, or determine that no proper $q$-coloring exists.
\end{theorem}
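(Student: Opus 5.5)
We follow the vertex-cover proof of \Cref{thm:flip-vc}: guess the equality pattern on the modulator, then solve a weighted assignment problem. First compute a set $S$ with $|S|\le k$ such that $K=G-S$ is a clique. This takes $\cO(2^k n^2)$ time by branching on non-adjacent pairs: some endpoint of every non-edge must lie in $S$. Every vertex of $K$ receives a distinct color in any proper coloring, so $|K|\le q$ is necessary; we may again assume $q\le n$.

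Next, enumerate the ordered structure on $S$. For each $s\in S$, guess whether its target color equals the target color of some vertex of $K$, or is a color unused on $K$. Refine this into a partition $\mathcal P$ of $S$ into independent blocks. Each block $P$ is either ``free'', meaning its color is not used on $K$, or is \emph{attached} to a single vertex $x_P\in K$ that is non-adjacent to all of $P$. The blocks attached to distinct vertices must be distinct. Guessing $x_P$ directly costs $n^k$, which is too much. Instead, we guess only which blocks are attached. The identity of $x_P$ is then decided inside the matching, as follows.

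Build a bipartite assignment instance. The left side contains the vertices of $K$ and the free blocks. The right side contains the colors $[q]$. We must assign distinct colors to the left side, and additionally each attached block must be merged with a $K$-vertex that is non-adjacent to it. The merge condition is the obstacle, since it couples $K$-vertices and blocks. The plan is to guess, for each attached block $P$, its ``type'': the neighborhood pattern $N(x)\cap S$ of its partner $x$, together with whether $\varphi(x)$ coincides with the initial color of the vertices in $S$. There are $2^{k}$ patterns per block, which gives $2^{\cO(k^2)}$ guesses overall, matching the claimed bound. Vertices of $K$ with the same pattern are interchangeable with respect to adjacency.

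The cost of giving color $c$ to a merged pair $(P,x)$ is $w(P,x,c)=|P\setminus\varphi^{-1}(c)|+[\varphi(x)\ne c]$. The cost for a free block or a lone $K$-vertex is defined analogously. Conflicts between $S$-blocks are excluded by the partition, since distinct blocks get distinct colors, and $K$ is a clique. Therefore every injective choice yields a proper coloring, and the total flip count decomposes additively, exactly as in \Cref{thm:flip-vc}.

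We then solve one min-cost assignment for each guess. The left side consists of the $K$-vertices, which must be matched either to a color or, jointly with an attached block slot of compatible pattern, to a color, plus the free blocks. The right side consists of the colors. Encoding attached slots as compatibility constraints gives a three-dimensional structure. To keep it bipartite, I would first fix for each attached block which $K$-vertex it merges with via a separate min-cost bipartite matching between blocks and same-pattern $K$-vertices. I expect this decoupling to be the hard step, since the merge choice interacts with the color choice. Failing that, I would fold the block's cost into the $K$-vertex's color weights: for a $K$-vertex $x$ hosting $P$, use $w'(x,c)=[\varphi(x)\ne c]+|P\setminus\varphi^{-1}(c)|$. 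The result is a single assignment with $\cO(n)$ left nodes and $q\le n$ colors, solvable in $\cO(n^3)$ time by the Hungarian method. The remaining step is to argue that guessing host patterns and using an exchange argument among same-pattern vertices lets us fix hosts greedily. Multiplying by the $2^{\cO(k^2)}$ guesses gives the stated $2^{\cO(k^2)}n^3$ running time.
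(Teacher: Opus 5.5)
There is a genuine gap, and it is the step you flag yourself and never close. Up to the guessing stage your plan matches the paper's. The partition $\mathcal P$ of $S$, together with a neighborhood type for the partner of each attached block, is exactly the paper's map $\sigma:\mathcal P\to\mathcal R\cup\{\bot\}$, where $\mathcal R$ is the set of neighborhood types of the clique, and this gives $2^{\cO(k^2)}$ guesses. After that, your weights $w(P,x,c)$ and $w'(x,c)$ need a concrete host $x$ for each attached block. Both routes you suggest (a separate block-to-vertex matching, or a greedy exchange rule) fix hosts before colors are chosen. This fails in general: the right host of $P$ is a vertex of its type initially colored with the \emph{final} color of $P$, and that color is an output of the assignment.

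For example, take one guess with $q=3$ and the following data:
a single type $R=\{x_1,x_2\}$ with $\varphi(x_i)=i$;
an attached block $P=\{p_1,p_2\}$ with $\varphi(p_i)=i$;
a free block of five vertices initially colored $1$.
With host $x_2$ you need only $2$ flips: color $2$ on $P\cup\{x_2\}$ and color $3$ on $x_1$. With host $x_1$, every assignment costs at least $3$. If the free block is initially colored $2$ instead, the roles of $x_1$ and $x_2$ swap, although $P$ and $R$ are unchanged. So the ``exchange argument among same-pattern vertices'' you postpone is precisely the missing content, and it cannot be done before the matching. Your extra guess of whether $\varphi(x)$ equals ``the'' initial color of the block is also ill-defined, since blocks need not be initially monochromatic, and it is not needed.

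The missing idea is to never commit to individual clique vertices inside a type. For each type $R$, create $|R|$ interchangeable nodes. There is one merged node for each block $P$ with $\sigma(P)=R$, after checking that $R$ is anticomplete to $P$ and that $|\sigma^{-1}(R)|\le|R|$. The remaining $|R|-|\sigma^{-1}(R)|$ nodes are clique-only. Add one node per free block. A node of type $R$ receiving color $c$ costs $[c\notin\varphi(R)]$, plus $|P\setminus\varphi^{-1}(c)|$ if it is merged with $P$. A free block costs $|P\setminus\varphi^{-1}(c)|$.

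The lemma you need is this. For every injective assignment of colors to nodes, let $T_R$ be the set of colors given to the nodes of type $R$. Then the minimum number of flips inside $R$ is $|R|-|T_R\cap\varphi(R)|$, which is exactly the sum of these node costs. Since the colors in $T_R$ are distinct, you can keep one initially $c$-colored vertex for every $c\in T_R\cap\varphi(R)$ on pairwise distinct vertices, and hand the remaining colors bijectively to the remaining vertices. Conversely, properness on the clique lets at most one vertex of $R$ keep any given color.

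Hosts are read off only after the matching. Any vertex of $R$ is a legal host, because all vertices of $R$ have the same neighborhood in $S$. Each guess then becomes a single rectangular assignment with at most $n$ left nodes and $q\le n$ colors, solvable in $\cO(n^3)$ time. This is how the paper's proof proceeds.
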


\begin{proof}[sketch]
Compute a minimum modulator $S$ and let $C= V(G)\setminus S$. Partition the clique $C$ into the at most $2^k$ classes $\mathcal R$ of equal neighborhoods in $S$. Enumerate the independent-block partitions $\mathcal P$ of $S$. In every proper target coloring, every color occurs at most once on $C$, so a block color is either absent from $C$ or reused by one vertex of a class $R$ anticomplete to that block. We enumerate these choices as maps $\sigma:\mathcal P\to\mathcal R\cup\{\bot\}$ and reject guesses that lack enough vertices or require more than $q$ colors.

For a surviving guess, create one assignment node for every block and the required number of clique-only nodes of every type. Match these nodes injectively to actual color names. The edge costs count changes on the block and charge zero for a clique vertex exactly when its assigned color already occurs in its type. For each type, distinct already present target colors can be kept on distinct representatives. All remaining assigned colors are assigned bijectively to the remaining vertices. Hence the matching cost is attainable and is also a lower bound on every realization.

There are at most $(k+1)^k$ partitions and $(2^k+1)^k$ maps, and hence $2^{\cO(k^2)}$ cases in total. The assignment side and the color side of each matching instance have at most $n$ nodes each, so weighted matching takes $\cO(n^3)$ time. Computing~$S$ via vertex cover in the complement graph and solving the resulting assignment problems gives the stated running time.
We present the full details in \Cref{app:flip-dtc}. 
\end{proof}

\subsection{Color swapping}

We now fix two colorings $\alpha,\beta:V(G)\to[q]$ and ask for the
minimum number of swaps that transform $\alpha$ into $\beta$.
We assume that
$\alpha$ and $\beta$ have the same color histogram.

The directed color-transition multigraph $D(\alpha,\beta)$ has vertex set
$[q]$, so its vertices represent colors. For every vertex $v\in V(G)$,
it contains one arc from $\alpha(v)$ to $\beta(v)$. Parallel arcs are
kept and regarded as distinct. Notice that the vertices of $D(\alpha,\beta)$ represent colors, not
vertices of $G$. Thus, $D(\alpha,\beta)$ has only $q$ vertices but one
arc for every vertex of $G$, and arbitrarily many graph vertices may
give rise to parallel copies of the same arc. In an arc-disjoint cycle decomposition (defined below),
these copies are treated as distinct. We write
$D^{\ne}(\alpha,\beta)$ for the directed multigraph obtained by deleting
all loops.

An \emph{arc-disjoint cycle decomposition} of $D^{\ne}(\alpha,\beta)$ into directed
simple cycles is a collection of directed simple cycles such that every
arc belongs to exactly one cycle. Here a
directed cycle is simple if it visits every color at most once, except
that its first and last colors coincide. Since $\alpha$ and $\beta$ have
the same histogram, every color has equal indegree and outdegree in
$D^{\ne}(\alpha,\beta)$, and hence such an arc-disjoint cycle decomposition exists.

The following statement is the unit-cost interchange-distance
characterization of Amir et al.~\cite[Lemma~1]{amir2009interchange},
stated in our coloring notation.

\begin{lemma}\label{lem:cycle-distance}
Let $M=|E(D^{\ne}(\alpha,\beta))|$, where parallel arcs are counted with
multiplicity, and let $\rho$ be the maximum number of cycles in an arc
decomposition of $D^{\ne}(\alpha,\beta)$ into directed simple cycles.
Then the minimum number of swaps transforming $\alpha$ into
$\beta$ is $M-\rho$.
\end{lemma}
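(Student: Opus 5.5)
We prove the two inequalities separately: at most $M-\rho$ swaps always suffice, and fewer never do. Throughout, call a vertex $v$ \emph{misplaced} under the current coloring $\gamma$ if $\gamma(v)\ne\beta(v)$. The misplaced vertices are exactly the arcs of $D^{\ne}(\gamma,\beta)$.

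\textbf{Upper bound.} Fix a decomposition of $D^{\ne}(\alpha,\beta)$ into $\rho$ arc-disjoint simple cycles. We resolve each cycle separately, using one fewer swap than its length. Take a cycle $c_1\to c_2\to\cdots\to c_\ell\to c_1$ of length $\ell$, realised by vertices $v_1,\dots,v_\ell$. Here $\alpha(v_i)=c_i$ and $\beta(v_i)=c_{i+1}$, with indices taken modulo $\ell$.
\begin{itemize}
\item Swap $v_1$ with $v_2$. Now $v_1$ has colour $c_2=\beta(v_1)$, so it is fixed, and $v_2$ carries $c_1$.
\item The arcs $c_1\to c_2\to c_3$ have thereby been replaced by a single arc $c_1\to c_3$. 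What remains is a simple cycle of length $\ell-1$.
\item Repeat on this shorter cycle. A cycle of length $2$ is finished by one swap that fixes both vertices.
\end{itemize}
So a cycle of length $\ell$ costs $\ell-1$ swaps. Each swap only touches vertices of the current cycle, so the other cycles are unaffected. Summing over the $\rho$ cycles gives $\sum(\ell-1)=M-\rho$ swaps.

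\textbf{Lower bound.} Define the potential $\Phi(\gamma)=|E(D^{\ne}(\gamma,\beta))|-\rho(\gamma)$. It is zero exactly when $\gamma=\beta$, and $\Phi(\alpha)=M-\rho$. It therefore suffices to show that one swap decreases $\Phi$ by at most one. Consider a swap of $u$ and $v$ with $\gamma(u)=a$, $\gamma(v)=b$ and $a\ne b$. Only the arcs of $u$ and $v$ change: the old arcs $a\to\beta(u)$ and $b\to\beta(v)$ become $b\to\beta(u)$ and $a\to\beta(v)$, where loops are deleted. Let $\Delta M$ be the change in the arc count; it lies in $\{-2,-1,0,1,2\}$.

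The key step is an inequality on $\rho$: if $\Delta M=-t$, then $\rho$ decreases by at most $t-1$. Equivalently, $\rho(\text{new})\ge\rho(\text{old})-t+1$. We argue this by reversing the swap, which is also a swap, and showing that $\rho$ can drop by at most $\Delta M+1$ in general.

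To do so, take an optimal decomposition of the new graph and build a decomposition of the old graph from it. Suppose first that both old arcs are present, so the old arcs are $a\to x$ and $b\to y$. The new graph contains $b\to x$ and $a\to y$, each unless it is a loop. Remove the cycles that contain these new arcs. There are at most two such cycles, since both arcs may lie in one cycle. Replace $b\to x$ and $a\to y$ by $a\to x$ and $b\to y$. The affected closed walks then recombine into at least one closed walk, and a closed walk splits into simple cycles. Each new loop corresponds to an old arc $a\to b$ or $b\to a$ and is handled similarly.

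A case analysis on $t$ and on whether the arcs share a cycle then gives the bound. In particular, when $t=2$ the pair $a\to b$, $b\to a$ forms a $2$-cycle, and the count drops by exactly one. Combining, $\Delta\Phi\ge -1$ for every swap, which proves the lower bound.

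\textbf{Main obstacle.} The hardest part is this case analysis bounding the change in the maximum number of cycles. In particular, the case $t=0$ needs care: there the two arcs are rerouted, and $\rho$ may rise by at most one.
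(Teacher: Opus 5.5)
Your upper bound is fine: swapping $v_1,v_2$ contracts $c_1\to c_2\to c_3$ to $c_1\to c_3$. So a cycle of length $\ell$ costs $\ell-1$ swaps and leaves the other cycles untouched.

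The gap is in the lower bound, where the key inequality points the wrong way. When $\Delta M=-t$, getting $\Phi(\text{new})\ge\Phi(\text{old})-1$ requires an \emph{upper} bound on the new cycle count, namely $\rho(\text{new})\le\rho(\text{old})-t+1$. The inequality you state, $\rho(\text{new})\ge\rho(\text{old})-t+1$, would instead give $\Phi(\text{new})\le\Phi(\text{old})-1$ for every swap. It is also false: start from $\gamma=\beta$ and swap two vertices of distinct colors. Then $t=-2$ and $\rho$ goes from $0$ to $1$, not to at least $3$. Your closing remark that for $t=0$ the value $\rho$ ``may rise by at most one'' has the correct direction, so this looks like a slip. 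As written, though, the lower bound does not follow.

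The surgery you describe is exactly what proves the correct inequality. Building a decomposition of the old graph from an optimal one of the new graph bounds $\rho(\text{old})$ from below, and what you need is $\rho(\text{old})\ge\rho(\text{new})+t-1$. The counting must be done per case, because ``remove at most two cycles, recover at least one closed walk'' only gives $\rho(\text{old})\ge\rho(\text{new})-1$, which suffices only for $t\le 0$.

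For $t=1$, both old arcs are present and exactly one new arc is a loop, say $b=x$. Then only the cycle through $a\to y$ is removed, and replacing $a\to y$ by $a\to b\to y$ gives a closed walk, so $\rho(\text{old})\ge\rho(\text{new})$. For $t=2$, both new arcs are loops, nothing is removed, and the $2$-cycle $a\to b\to a$ is added, so $\rho(\text{old})\ge\rho(\text{new})+1$. With the inequality flipped and these cases written out, your potential argument is complete.

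The paper does not prove this lemma; it cites Amir et al. If you want a self-contained lower bound without case analysis, use permutations. A sequence of $k$ swaps moves the initial colors along a permutation $\sigma$ of $V(G)$ that is a product of $k$ transpositions, so $\sigma$ has at least $n-k$ cycles. Each cycle of $\sigma$ traces a closed walk in $D(\alpha,\beta)$ through the arcs of its vertices. At most $n-M$ of these walks consist only of loops. The remaining ones partition $D^{\ne}(\alpha,\beta)$ into nonempty closed walks, each of which splits into at least one simple cycle, so there are at most $\rho$ of them. Hence $n-k\le n-M+\rho$, that is, $k\ge M-\rho$.
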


Hence, a maximum-cardinality cycle decomposition determines the optimum swap
distance. Computing such a decomposition is $\NP$-hard when $q$ is unbounded
~\cite[Corollary~2.6]{amir2009interchange}; Amir et al. also give a
polynomial-time algorithm for every fixed alphabet
~\cite[Theorem~2.9]{amir2009interchange}. For two fixed colorings, the
cycle-type ILP below in fact yields fixed-parameter tractability parameterized
by $q$: it uses one variable for each directed simple cycle type on $[q]$.
Note that this does not solve \SwapCD parameterized by $q$ alone, since its proper target
coloring is not given. Indeed, \SwapCD remains $\NP$-complete for every fixed
$q\ge3$~\cite{fellows2026solution}. The additional vertex-cover parameter lets us
encode all possible proper targets by boundedly many neighborhood types, while
the cycle part of the ILP depends only on $q$.

\begin{theorem}\label{thm:swap-vc}
\SwapCD is fixed-parameter tractable parameterized by $\VC(G)+q$.
\end{theorem}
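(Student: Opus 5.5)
We follow the pattern of \Cref{thm:flip-vc}: guess the structure of the target coloring on a small modulator, then encode everything else in an ILP whose number of variables depends only on $k+q$, where $k=\VC(G)$, and solve it with Lenstra's theorem. Compute a minimum vertex cover $C$ with $|C|=k$ and let $I=V(G)\setminus C$. We enumerate all target colorings $\psi_C:C\to[q]$ that are proper on $G[C]$; there are at most $q^k$ of them. Group $I$ into \emph{types}. The type of $x\in I$ is the pair $(N(x)\cap C,\varphi(x))$, so there are at most $2^k q$ types. Once $\psi_C$ is fixed, a vertex $x\in I$ may receive a target color $c$ exactly when $c\notin\psi_C(N(x))$. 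Because $I$ is independent, this condition depends only on the type of $x$. Also, vertices of the same type are interchangeable: permuting the target colors within a type does not change the multiset of arcs of $D(\varphi,\psi)$.

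For a fixed $\psi_C$, a proper target $\psi$ is therefore described by integers $y_{t,c}\ge0$: the number of vertices of type $t$ that receive target color $c$. We require $y_{t,c}=0$ whenever $c\in\psi_C(N_t)$, and $\sum_c y_{t,c}=n_t$, where $n_t$ is the number of vertices of type $t$. By \Cref{lem:reachability}, $\psi$ must have the same histogram as $\varphi$. This is the linear constraint $|\psi_C^{-1}(c)|+\sum_t y_{t,c}=|\varphi^{-1}(c)|$ for every $c$. The arc multiplicities of $D(\varphi,\psi)$ are then linear in the $y$ variables: the number of arcs from $a$ to $c$ equals the fixed contribution of $C$ plus $\sum_{t:\varphi\text{-color }a} y_{t,c}$.

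To minimise the number of swaps we use \Cref{lem:cycle-distance}. Introduce one variable $z_\gamma\ge0$ for every directed simple cycle type $\gamma$ on $[q]$ of length at least two; there are at most $q\cdot q!$ such types. Require that, for every ordered pair $a\ne c$, the number of cycles using arc $(a,c)$ equals the multiplicity of $(a,c)$: $\sum_{\gamma\ni(a,c)}z_\gamma=m_{a,c}(y)$. Then $M$ is a linear expression in $y$, and we minimise $M-\sum_\gamma z_\gamma$. Alternatively, we decide feasibility of the constraint $M-\sum_\gamma z_\gamma\le b$.

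The main point to verify is that this ILP captures exactly the swap distance, i.e.\ that its optimum equals $\min_\psi (M-\rho)$. In one direction, any feasible $(y,z)$ yields an actual $\psi$: distribute the colors inside each type arbitrarily according to $y$. The $z$ variables give a decomposition of the arc multiset into $\sum z_\gamma$ simple cycles. Since parallel arcs are indistinguishable for counting purposes, any assignment of concrete arcs to the cycle slots works, so $\rho\ge\sum z_\gamma$. In the other direction, an optimal $\psi$ together with a maximum decomposition gives a feasible point of the ILP with the same objective value.

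The number of variables is at most $2^kq^2+q\cdot q!$, and the encoding length is polynomial in $n$. Lenstra's theorem therefore solves each ILP in $f(k,q)n^{\cO(1)}$ time. Over the $q^k$ guesses of $\psi_C$, the total running time is $f'(k,q)n^{\cO(1)}$. The answer is yes iff some guess gives optimum at most $b$.
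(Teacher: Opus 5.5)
Your proposal is correct and takes essentially the same route as the paper. It guesses a proper coloring of the vertex cover, counts target colors per (neighborhood class, initial color) pair, introduces one variable per directed simple color-cycle type with arc-multiplicity equations, bounds the swap count via \Cref{lem:cycle-distance}, and applies Lenstra's theorem. The differences are cosmetic: you drop loops from the cycle types, you add an explicit histogram constraint that the balanced arc equations already imply, and your objective $M-\sum_\gamma z_\gamma$ coincides with the paper's budget expression $\sum_Z(|Z|-1)y_Z$.
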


\begin{proof}
Compute a minimum vertex cover $C$ of size $k$ by the standard bounded-search-tree algorithm, and put $I=V(G)\setminus C$. Enumerate the $q^k$ maps $\gamma:C\to[q]$ and discard those that are not proper on $G[C]$. Partition $I$ into the at most $2^k$ classes~$R$ of equal neighborhoods in $C$.

We write the following ILP (the full details are presented in \Cref{app:ilp}).
For a fixed $\gamma$, introduce a nonnegative integer variable $x_{R,a,c}$ for every class~$R$ and colors $a,c$. It counts vertices in $R$ that initially have color $a$ and finally have color~$c$. We require $\sum_c x_{R,a,c}=|R\cap\varphi^{-1}(a)|$. We set $x_{R,a,c}=0$ whenever $c$ occurs on~$N(R)$ under~$\gamma$.

Together with the fixed transitions $\varphi(v)\to\gamma(v)$ on $C$, the variables determine a multiplicity
$T_{a,c}=|\{v\in C:\varphi(v)=a,\gamma(v)=c\}|+\sum_Rx_{R,a,c}$
for every ordered color pair.
These variables model the arcs of transition multigraph.
Let~$\Gamma_q$ be the set of directed simple cycles on $[q]$, including loops, and introduce a nonnegative integer variable $y_Z$ for every $Z\in\Gamma_q$. For every ordered pair $(a,c)$ require
$\sum_{Z:(a,c)\in E(Z)}y_Z=T_{a,c}$, and impose the budget constraint
$\sum_{Z\in\Gamma_q}(|Z|-1)y_Z\le b$.

The supply and forbidden-color constraints describe exactly the proper extensions of $\gamma$ to the independent set $I$. The cycle equations ensure that the transition multigraph is the union of directed simple cycles and therefore enforce equality of the initial and target color multiplicities. Every feasible choice of the $y$-variables specifies a swap sequence of length $\sum_{Z\in\Gamma_q}(|Z|-1)y_Z$, so the budget constraint implies that the target is reachable within $b$ swaps. Conversely, the transitions of every proper target reachable within at most~$b$ swaps satisfy the $x$-constraints.
Let $\psi$ be the coloring prescribed by $\gamma$ and the $x$-variables.
Choosing a decomposition of $D^{\ne}(\varphi, \psi)$ with the maximum number of directed simple cycles and taking every loop as a one-vertex cycle supplies $y$-variables whose budget expression is, by \Cref{lem:cycle-distance}, the exact swap distance. Thus the ILP is feasible exactly for the desired targets. 

There are $|\Gamma_q|=q+\sum_{\ell=2}^q\binom q\ell(\ell-1)!<e\,q!$ variables of type
$y$ and at most $2^kq^2$ variables of type $x$. By \Cref{lem:reachability}, we may replace $b$ by
$\min\{b,n-1\}$. Every coefficient and right-hand side then has
$\cO(\log n+\log q)$ bits, and the total ILP encoding length is
$f(k+q)n^{\cO(1)}$ for a computable function $f$. Lenstra's theorem, applied
to each of the $q^k$ choices for $\gamma$, gives a running time
$f'(k+q)n^{\cO(1)}$.
\end{proof}

We remark that this transition-multigraph approach does not seem to extend to slides, since the available moves depend on the positions of the colors in the graph. We therefore could not resolve the complexity of \SlideCD with respect to $\VC(G)+q$ and leave it as an open question.

\section{Hardness Results}

We now turn to further natural structural graph parameters and show hardness with respect to these. 
The full details of the longer reductions are given in \Cref{app:hardness}.

\subsection{Flipping: treedepth and feedback vertex set}

Fiala, Golovach, and Kratochv{\'i}l proved that \LC is
$\Wone$-hard parameterized by vertex cover~\cite{fiala2011coloring}.
We reduce from their problem using a list-enforcement gadget that
simultaneously bounds the treedepth and the feedback vertex set of the
constructed graph in terms of the vertex-cover number of the source
instance.

\begin{theorem}\label{thm:flip-hard}
\FlipCD is $\Wone$-hard parameterized by $\TD(G)+\FVS(G)$.
\end{theorem}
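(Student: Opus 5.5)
We reduce from \LC parameterized by the vertex-cover number, which is $\Wone$-hard by Fiala, Golovach, and Kratochv{\'i}l~\cite{fiala2011coloring}. Let $(H,L)$ be an instance with lists $L(v)\subseteq[q]$ and a vertex cover $C$ of $H$ of size $k$. We build an instance $(G,q',\varphi,b)$ of \FlipCD. The graph $G$ consists of $H$ together with list-enforcement gadgets. The budget $b$ is chosen so that every optimal repair recolors exactly one designated vertex per gadget (or per vertex of $H$), and otherwise only vertices of $H$.

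\emph{Construction.} Take $q'=q+1$ and a fresh color $0$. Initially color every vertex of $H$ with $0$. Since $H$ contains edges, this coloring is improper. Each vertex $v$ of $H$ must therefore leave color~$0$, so we charge one flip per vertex of $H$ and set $b=|V(H)|$.

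To forbid the colors $c\notin L(v)$, attach to $v$, for every $c\in[q]\setminus L(v)$, a set of $b+1$ private pendant leaves colored $c$. Assigning $c$ to $v$ would then require more than $b$ flips among these leaves. Because the leaves carry distinct colors from $[q]$ that are not used by $v$ in any intended solution, the remaining coloring is proper. We must also forbid the final color $0$ on $v$, but this is automatic: every edge of $H$ initially has both endpoints colored $0$, so each non-isolated vertex has to move anyway. Isolated vertices can be deleted beforehand. The budget then exactly equals the number of vertices of $H$.

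\emph{Correctness.} Given a list coloring $\lambda$ of $H$, flip each $v$ to $\lambda(v)$. This uses $|V(H)|$ flips, and every leaf color differs from $\lambda(v)$, so the result is proper. Conversely, consider a solution within budget $b$. No leaf group can be fully recolored. Hence every $v$ ends with a color in $L(v)\cup\{0\}$.

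If at least one vertex keeps color $0$, a counting argument is needed. The $0$-colored vertices must form an independent set in $H$. Each such vertex saves a flip, but this slack must not be usable to recolor leaves in a way that helps. I would repair this by adding a pendant leaf colored $0$ to every $v$. Then color $0$ is forbidden on $v$ unless that leaf is flipped, which costs one flip anyway. Using the slack instead on list-violating leaves cannot help, since each forbidden color has $b+1$ copies. With this adjustment, restricting the final coloring to $H$ gives a proper list coloring.

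\emph{Parameter bounds.} The graph $G$ is $H$ with pendant leaves attached. Consequently $\FVS(G)\le\FVS(H)\le\VC(H)=k$, since a vertex cover of $H$ already leaves only leaves and independent vertices, hence a forest. For treedepth, use the elimination forest that places $C$ as a path of length $k$. Below it hang the vertices of $I=V(H)\setminus C$, each with its leaves as children. This gives $\TD(G)\le k+2$.

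The main obstacle is keeping the construction polynomial while handling the degenerate solutions. The number of leaves is at most $|V(H)|\cdot q\cdot(b+1)$, which is polynomial since $q\le|V(H)|$ may be assumed in \LC. The remaining care goes into the interaction between the $0$-leaf trick and the budget, so that no vertex can profitably stay at color $0$.
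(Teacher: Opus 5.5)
\textbf{There is a genuine gap: the fresh color $0$ is never blocked, and your repair does not block it.} Color $0$ belongs to the palette $[q+1]$ but to no list. Your leaf gadgets block only the colors in $[q]\setminus L(v)$, so a solution may leave an independent set of original vertices at color $0$, and then the restriction to $H$ is not a list coloring. Your claim that every non-isolated vertex ``has to move anyway'' is false: on an edge, moving one endpoint suffices.

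The single pendant leaf of color $0$ does not forbid $0$ on $v$. Keeping $v$ at $0$ then costs exactly the one flip of that leaf, the same as flipping $v$, so the budget cannot tell the two apart. Your own remark ``which costs one flip anyway'' says precisely this.

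Concretely, let $H$ be a single edge $uv$ with $q=1$ and $L(u)=L(v)=\{1\}$, which is a no-instance. Your instance has palette $\{0,1\}$, budget $2$, and one $0$-leaf at each of $u,v$. Flip $u$ to $1$ and flip the $0$-leaf of $v$ to $1$. This gives a proper coloring with two flips, so a no-instance is mapped to a yes-instance. Without the $0$-leaves, a single flip of $u$ already suffices.

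\textbf{How to fix it.} Block color $0$ exactly like the other forbidden colors, by attaching $b+1$ leaves of color $0$ to every original vertex. Alternatively, as the paper does, drop the extra color altogether: initially color every original vertex with color $1\in[q]$ and keep the palette $[q]$. Then every final color lies in $[q]$, and the $b+1$ pendant copies force it into $L(v)$. The budget $b=|V(H)|$ is enough for the forward direction, with no tightness or counting argument needed. Your $\FVS$ and treedepth bounds then go through, provided you also hang the leaves of vertices in $C$ directly below their centres.

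A minor point: $q\le|V(H)|$ need not hold for \LC. What keeps the construction polynomial is that, after deleting colors that occur in no list, $q$ is bounded by the list-encoding length.
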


\begin{proof}[sketch]
Let $(G,q,L)$ be a \LC instance. First, delete colors that occur in no list. Let $n=|V(G)|$ and set $b=n$. Construct a graph $H$ from $G$ as follows. Initially color every original vertex with color $1$. For every $v\in V(G)$ and every forbidden color $c\notin L(v)$, attach $b+1$ pendent vertices to $v$ and color them~$c$. A list coloring is reached by changing only the original vertices. Conversely, if a reachable proper coloring gives $v$ a forbidden color $c$, all its $b+1$ $c$-colored pendent vertices must have changed, which exceeds the budget.

If $C$ is a vertex cover of the source graph, deleting $C$ from the constructed graph leaves a star forest, so $\FVS(H)\le |C|$. For treedepth, put the vertices of $C$ on a chain, put every original vertex outside $C$ below the chain, and put each pendent vertex below its center. This gives $\TD(H)\le |C|+2$. See \Cref{app:list} for the complete argument.
\end{proof}

\subsection{Swapping and sliding: bandwidth and distance to paths}

We use reductions from \EC for both the swap and slide variants. For the
swap variant, the input graph itself suffices. For the slide variant,
connectivity must be ensured without destroying the parameter bound.

\begin{theorem}\label{thm:structural-hard}
Each of \SwapCD and \SlideCD is $\Wone$-hard parameterized by
$q+\BW(G)$ and by $q+\DDP(G)$. For the parameter $q+\DDP(G)$, hardness
holds even on graphs with a dominating vertex, and hence on graphs of
diameter at most two.
\end{theorem}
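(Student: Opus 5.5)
We reduce from \EC, which is $\Wone$-hard parameterized by $q+\BW(G)$ (via interval graphs, as noted in the preliminaries) and by $q+\DDP(G)$~\cite{gomes2023structural}. Given an instance $(G,q)$ with $n=aq+r$, $0\le r<q$, we fix an initial coloring $\varphi$ with the equitable histogram: exactly $r$ colors used $a+1$ times and $q-r$ colors used $a$ times. The assignment of colors to vertices is arbitrary. We set $b=n-1$, or a suitable larger budget in the slide case.

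For \SwapCD the graph itself suffices. By \Cref{lem:reachability}(1), the colorings reachable within $n-1$ swaps are exactly those with the same histogram as $\varphi$. A proper coloring with an equitable histogram is an equitable coloring. Conversely, any equitable $q$-coloring has, after renaming colors, exactly this histogram. Since renaming colors preserves properness, we may assume it matches $\varphi$'s histogram. Hence the instances are equivalent. The parameter $q$ is unchanged and the graph is unchanged, so $\BW$ and $\DDP$ are preserved. This gives both hardness claims for swaps.

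For \SlideCD we need the whole graph to form one component, so that \Cref{lem:reachability}(2) gives reachability of every histogram-preserving coloring. The plan is to add gadgets that connect $G$ without altering which histograms can be properly colored.

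\textbf{Distance-to-paths parameter.} We add a dominating vertex $u$ and increase the palette to $q+1$. The vertex $u$ gets the new color $q+1$ initially. Since $u$ is adjacent to everything, in any proper target it is the only vertex of its color. Slides preserve the global histogram, so the single $(q+1)$-colored vertex must be $u$ itself. The remaining vertices then need a proper coloring of $G$ with the original equitable histogram. The resulting graph is connected with diameter at most two, and $\DDP$ increases by at most one. The budget is $(n)(2\cdot 2-1)=3n$, which is polynomial. This also yields the dominating-vertex claim.

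\textbf{Bandwidth parameter.} A dominating vertex destroys bandwidth, so instead we connect consecutive components of $G$, ordered by the bandwidth layout, through a short path of fresh vertices. To keep the target constraint exact, each connector vertex receives its own dedicated fresh colors. Since one new color per connector would make $q$ unbounded, this must be arranged differently.

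\textbf{Main obstacle.} The alternative is to choose connector paths of length $2$ whose middle vertices are forced into colors that are otherwise absent. Those colors can be shared among all connectors and must not occur on $G$. Take two extra colors $q+1,q+2$ and attach paths $x$–$p$–$y$ between a vertex $x$ of one component and $y$ of the next. The middle vertex $p$ gets color $q+1$, with as many $(q+1)$-vertices in total as there are connectors. We must ensure properness forces every $(q+1)$-token onto a middle vertex. If needed, we attach to each $p$ a pendant vertex of color $q+2$ to enforce this by counting.

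This counting argument is where I expect difficulty. The connector vertices form an independent set of middle vertices. We must argue that a proper target cannot place color $q+1$ on original vertices while giving middle vertices original colors. Where this fails, I would instead use a larger gadget or take the disjoint union of components rather than connecting them, checking that \Cref{lem:reachability}(2) still applies componentwise. The main remaining risk is that equitable colorings of components individually differ from the global histogram, which is exactly why connectivity is needed.

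The bandwidth of the result stays bounded because the connectors only link consecutive blocks of the layout.
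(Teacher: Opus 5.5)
Your \SwapCD reduction and your cone construction for $q+\DDP(G)$ are the ones the paper uses, with one slip. The claim ``the single $(q+1)$-colored vertex must be $u$'' only follows if no color of $[q]$ has multiplicity one. When $q\le n<2q$, some colors of $[q]$ occur exactly once. A proper target may then give $u$ such a color $i$ and place the $(q+1)$-token inside $G$. In that case you must exchange the names $i$ and $q+1$, as the paper does, to obtain an equitable $q$-coloring of $G$. You should also apply the cone to \SwapCD, since the dominating-vertex claim is stated for both variants; the same argument works verbatim.

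The genuine gap is \SlideCD parameterized by $q+\BW(G)$, where you give no working construction. The path gadget fails because nothing confines the new colors to the gadget. Take $q=2$ and let $G$ be two disjoint triangles, which is not even $2$-colorable. Add one connector $x$--$p$--$y$ with a pendant $p'$ at $p$, so the palette is $[4]$ and the histogram is $(3,3,1,1)$. Color the triangle of $x$ with $2,1,3$ and the triangle of $y$ with $2,1,4$, giving $x$ and $y$ color $2$, and set $p=1$, $p'=2$. This is a proper coloring with the initial histogram on a connected graph, hence slide-reachable, so a no-instance is mapped to a yes-instance. Your disjoint-union fallback also fails, since slides preserve the histogram of every component.

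The missing idea is a gadget whose every proper coloring uses each color a fixed number of times: a clique on $q+1$ vertices over the palette $[q+1]$ must be rainbow. The paper adds a chain of $n$ initially rainbow $(q+1)$-cliques and joins each component of $G$ to the color-$(q+1)$ vertex of one of them. Every proper target then spends exactly $n$ copies of each color on the cliques. Hence $G$ keeps exactly the equitable histogram on $[q]$ and never receives color $q+1$, while connectivity gives reachability by \Cref{lem:reachability}. Interleaving the cliques with a folded bandwidth layout of $G$ gives $\BW\le 2(q+2)(\BW(G)+1)$. Within your own plan, it would also suffice to insert one such rainbow clique between consecutive component blocks of the layout and attach both link edges to its color-$(q+1)$ vertex. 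This keeps the bandwidth at most $\max\{\BW(G),q+1\}$.
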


\begin{proof}[sketch]
Given an \EC instance $(G,q)$ with $n=aq+r$, initially color the vertices
of $G$ so that $r$ colors have multiplicity $a+1$ and all other colors
have multiplicity $a$. Every proper target reachable by swaps is an
equitable coloring of $G$, since swaps preserve these multiplicities.
Conversely, every equitable coloring can be relabeled to have the
prescribed multiplicity for each color and, by
\Cref{lem:reachability}, can then be reached in at most $n-1$ swaps.
This proves the claimed $\Wone$-hardness of \SwapCD for both
parameterizations, without changing the graph.

For the distance-to-disjoint-paths lower bound for both the swap and
slide variants, add a universal vertex $z$ of a new, unique color. Arbitrary transpositions still work for swaps and are simulated for slides through the path $u,z,v$. In a proper target, the color of $z$ is unique.  Exchanging two color names if necessary yields an equitable coloring of $G$. Removing $z$ recovers $G$, so the parameter increases by at most two.

For the slide variant parameterized by $q+\BW(G)$, attach a chain of $n$ rainbow $(q+1)$-cliques and connect every component of $G$ to a suitable clique. The chain makes the graph connected, while its forced rainbow coloring reserves exactly $n$ copies of each color. A folded bandwidth order places each component connector last and stretches every original distance by a factor of at most two, giving bandwidth at most $2(q+2)(\BW(G)+1)$. In \Cref{app:equitable} we give the full construction and verify all histogram, reachability, and bandwidth claims.
\end{proof}

\begin{figure}[t]
\centering
\vspace{3mm}
\resizebox{0.88\linewidth}{!}{\input{img/slide-bw}}
\caption{The connectivity gadget for the sliding reduction, shown for $q=3$. Rainbow $(q+1)$-cliques form a chain, and a connector of each component of $G$ is attached to a distinct clique in the chain.}
\label{fig:slide-bandwidth}
\end{figure}

\subsection{The four-color lower bound}

The preceding reductions bound structural parameters by computable functions of the input parameter. For diameter, we can prove $\NP$-hardness even for a fixed value of the parameter.

\begin{theorem}\label{thm:diameter}
For each of flipping, swapping, and sliding, \CD is $\NP$-complete on connected graphs of diameter exactly two, even with exactly four available colors. Consequently, on connected input graphs, each variant is $\paraNP$-complete parameterized by $q+\diam(G)$, and also by $q$ together with either radius or domination number.
\end{theorem}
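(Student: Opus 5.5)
We reduce from \CD with three colors, which is $\NP$-complete for flipping, swapping and sliding, even on planar bipartite graphs~\cite{garnero2018fixing,debiasi2019restoring,fellows2026solution}. Let $(G,3,\varphi,b)$ be such an instance. Membership in $\NP$ is immediate for all three variants, because a move sequence of length at most $b$ serves as a certificate. For swaps and slides we may assume $b\le\mathrm{poly}(n)$ by \Cref{lem:reachability}, and for flips $b\le n$ holds trivially.

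The construction adds a universal vertex $z$ with the fourth color $4$, giving the graph $G'$. Every pair of vertices of $G$ is then at distance at most two through $z$, so $\diam(G')\le2$. If $G$ is not complete, the diameter is exactly two. We may assume $G$ is not complete, since instances that are complete graphs on more than three vertices are trivial no-instances and can be replaced by a fixed no-instance. In any proper $4$-coloring of $G'$, the color of $z$ occurs nowhere else, so $G$ receives a proper $3$-coloring with the remaining three colors. A dominating vertex also gives radius one and domination number one. This yields the $\paraNP$ consequences for $q+\diam$, $q+$radius and $q+$domination number.

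The key steps are as follows.

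\emph{Flips.} A solution for $G$ is a solution for $G'$, since $z$ keeps color $4$. Conversely, take a proper target on $G'$. If $z$ keeps color $4$, the restriction to $G$ costs no more flips than the whole sequence. If $z$ is flipped to some color $c\neq4$, then no vertex of $G$ ends with $c$, so color $4$ is free to play the role of $c$ on $G$. Relabeling $c\leftrightarrow4$ on the final coloring of $G$ and leaving $z$ at $4$ changes the cost only by the following amount: vertices initially colored $c$ that were recolored $4$ now stay $c$, which saves flips, and no vertex initially colored $4$ exists in $G$. So the cost does not increase, and in particular we save the flip of $z$.

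\emph{Swaps.} If $z$ swaps its color $4$ away, then in the final coloring exactly one vertex has color $4$ and it is a vertex $v$ of $G$. Since $z$ is universal, the color of $z$ appears nowhere else. Renaming colors so that $z$ ends with $4$ is not free, however, because swaps must preserve the histogram. I expect this to be the main obstacle. I plan to handle it by arguing at the level of targets: the final coloring $\psi$ of $G'$ has $\psi(z)=c$ and $\psi(v)=4$ for a unique $v$. Consider $\psi'$ with $\psi'(z)=4$, $\psi'(v)=c$, and $\psi'=\psi$ elsewhere. Then $\psi'$ is proper, because $v$ becomes the only vertex of color $c$ in $G$ other than where $c$ was absent, and $\psi'$ has the same histogram. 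Its transition multigraph is obtained by rerouting, so by \Cref{lem:cycle-distance} the swap distance does not increase: the arcs $4\to c$ at $z$ and $a\to4$ at $v$ are replaced by a loop at $z$ and $a\to c$ at $v$, which lowers $M$ by at least one and $\rho$ by at most one. The restriction of the resulting target to $G$ is a proper $3$-coloring with the same histogram as $\varphi$, and its distance from $\varphi$ is the same as in $G'$.

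\emph{Slides.} By the same target-level argument, the target fixes $z$. Any sequence of swaps on $G$ can be simulated in $G'$ by slides through $z$, using three slides per swap. Because of this factor of three, the budget correspondence is not exact for slides, so I would reduce instead from \SwapCD on $G$ to \SlideCD on $G'$, using swap distance equals slide distance up to factor three. That is not exact either. So for slides I would rather start directly from the known $\NP$-hardness of \SlideCD and check that adding $z$ does not create shortcuts. As an alternative, I would add two universal twins $z_1,z_2$ colored $4$ and exploit $\diam\le2$ carefully. This step is the one I expect to need the most care, possibly replacing $b$ by an exact count via \Cref{lem:reachability}.
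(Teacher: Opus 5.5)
\textbf{The slide case is not proved.} Your flip argument is correct. Your swap argument is also correct. The claim that $\rho$ drops by at most one holds because color $4$ has exactly one outgoing arc (at $z$) and one incoming arc (at $v$) in $D^{\ne}(\varphi',\psi)$. Both arcs therefore lie on one cycle $\ldots,a,4,c,\ldots$, which can be shortcut to $a\to c$, so the distance does not increase. For slides, however, you list options but none of them works.

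\begin{itemize}
\item \emph{Reducing from \SlideCD on $G$.} The universal vertex does create shortcuts. Any transposition of two vertices of $G$, even in different components, costs three slides through $z$. For example, take two disjoint edges, one monochromatic in color $1$ and one in color $2$. This is a no-instance in $G$ for every budget, but a yes-instance in $G'$ with budget $3$.
\item \emph{Reducing from \SwapCD on $G$ with budget $3b$.} The backward direction fails. A slide sequence of length at most $3b$ only certifies swap distance at most $3b$, not at most $b$, since the slides may run along edges of $G$ without touching $z$.
\item \emph{Universal twins.} This does not touch the budget mismatch at all.
\end{itemize}

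The underlying problem is that you try to transfer a binding budget across a change of move model.

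\textbf{The missing idea is to make the budget non-binding.} The paper reduces from 3-Coloring rather than from three-color \CD. It adds three independent reservoirs $R_1,R_2,R_3$ of $n$ vertices each, adjacent only to $z$. Initially $G$ is colored $1$, $R_i$ is colored $i$, and $z$ is colored $4$.
\begin{itemize}
\item \emph{Forward direction.} Any proper 3-coloring of $G$ can be completed on the reservoirs to a proper coloring with the initial histogram $(2n,n,n,1)$. By \Cref{lem:reachability} this coloring is reachable within $N-1$ swaps, where $N=4n+1$. It is also reachable within $3(N-1)$ slides, by fixing $z$ and routing each transposition through it.
\item \emph{Backward direction.} Only properness and histogram preservation are used. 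For $n\ge 2$, color $4$ is the unique color of multiplicity one, so $z$ must keep it and $G$ is properly 3-colored.
\end{itemize}
This handles swaps and slides uniformly, with no distance computation. Your closing remark about ``replacing $b$ by an exact count via \Cref{lem:reachability}'' points this way. But it only works once the source problem is $\NP$-hard with a non-binding budget, and the reservoir padding is what supplies that.
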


\begin{proof}
We reduce from 3-Coloring~\cite{garey1979computers}. Let $G$ have $n\ge2$ vertices. Add three independent reservoirs $R_1,R_2,R_3$, each of size $n$, and a vertex $z$ adjacent to every vertex of $G\cup R_1\cup R_2\cup R_3$. There are no other edges incident with reservoir vertices. Initially, $z$ has color $4$, every vertex of $G$ has color $1$, and every vertex of~$R_i$ has color $i$. Put $N=4n+1$. Set the budget to $n$ for flips, $N-1$ for swaps, and $3(N-1)$ for slides. The constructed graph is connected and has a dominating vertex. Two reservoir vertices are nonadjacent, so its diameter is exactly two.

Suppose first that $G$ has a proper 3-coloring $\psi$. In the flip model, recolor the vertices of $G$, using at most $n$ flips. For the other models, write $g_i=|\psi^{-1}(i)|$. Color the $3n$ reservoir vertices so that $2n-g_1$ of them have color $1$, $n-g_2$ have color $2$, and $n-g_3$ have color $3$. These numbers are nonnegative and sum to~$3n$. 
The resulting coloring is proper and has the same histogram $(2n,n,n,1)$ as the initial coloring. \Cref{lem:reachability} gives at most $N-1$ swaps. 
For slides, choose the matching of tokens so that $z$ remains the unique vertex colored~4 and replace every transposition of two other vertices $u,v$ by the three slides $(u,z),(z,v),(u,z)$. Thus $3(N-1)$ slides suffice.

Conversely, consider a reachable proper target. Under flips, the color on the universal vertex cannot occur on $G$, so the restriction to $G$ uses at most three colors. Under swaps or slides, the histogram remains $(2n,n,n,1)$. We can assume $n\ge2$ as otherwise we could add isolated vertices, thus color $4$ is the only color with multiplicity one. The color on a universal vertex is globally unique in every proper coloring, so $z$ must still have color $4$. Hence $G$ is properly colored with colors $1,2,3$.

Membership in $\NP$ has been established in all three models~\cite{garnero2018fixing,debiasi2019restoring,fellows2026solution}. The construction also has radius one and domination number one, proving the stated consequences.
\end{proof}

\subsection{A uniform-base lower bound}

Jaffke and Jansen proved that, unless the Exponential Time Hypothesis (ETH) fails~\cite{impagliazzo2001eth}, there is no constant $\theta$ that gives an $\cO(\theta^k n^{\cO(1)})$ algorithm for $q$-Coloring parameterized by $k=\VC(G)$ simultaneously for all fixed $q$~\cite{jaffke2023finegrained}. This carries over to all three discovery models.

\begin{theorem}\label{thm:eth}
For each variant $X\in\{\FlipCD,\SwapCD,\SlideCD\}$, there is no universal constant $\theta$ such that, for every fixed $q$, the restriction of $X$ to $q$ colors can be solved in $\cO(\theta^{\VC(G)}n^{\cO(1)})$ time, unless ETH fails.
\end{theorem}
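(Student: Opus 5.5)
We reduce from $q$-Coloring parameterized by vertex cover. The plan is to give, for each fixed $q$, a polynomial-time reduction that maps a $q$-Coloring instance $G$ to a \CD instance $(G',q',\varphi,b)$ with $q'=q+\cO(1)$ and $\VC(G')\le\VC(G)+\cO(1)$. Then a universal-base algorithm for $X$ would give one for $q$-Coloring, contradicting Jaffke and Jansen~\cite{jaffke2023finegrained}. Concretely, suppose $X$ on $q'$ colors is solved in time $\cO(\theta^{\VC(G')}n^{\cO(1)})$ for every fixed $q'$. Then $q$-Coloring is solved in time $\cO(\theta^{c}\theta^{\VC(G)}n^{\cO(1)})$, a uniform base $\theta$ for all $q$. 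If $\VC(G')\le\VC(G)+c$ with $c$ a constant independent of $q$, the base stays $\theta$, which is what the contradiction needs.

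\emph{Flip model.} I would take $G'=G$, $q'=q$, $\varphi\equiv 1$ and $b=n$. Any proper $q$-coloring is reachable within $n$ flips, and flips cannot leave the palette $[q]$. So $G$ is $q$-colorable iff the instance is a yes-instance, and the vertex cover number is unchanged.

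\emph{Swap and slide models.} I would adapt the construction of \Cref{thm:diameter}. Add $q$ independent reservoirs $R_1,\dots,R_q$, each of size $n$, and a universal vertex $z$ adjacent to all of $G$ and all reservoirs. Use $q'=q+1$ colors: $z$ gets color $q+1$, $G$ gets color $1$, and $R_i$ gets color $i$. Set $N=(q+1)n+1$, budget $N-1$ for swaps and $3(N-1)$ for slides. Then $C\cup\{z\}$ is a vertex cover whenever $C$ is one of $G$, since the reservoirs are independent and adjacent only to $z$. Hence $\VC(G')\le\VC(G)+1$.

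For completeness, take a proper $q$-coloring $\psi$ of $G$ with class sizes $g_i$. Color the $qn$ reservoir vertices so that the total histogram is again $(2n,n,\dots,n,1)$. This needs $2n-g_1\ge0$ and $n-g_i\ge0$, which hold, and the counts sum to $qn$. \Cref{lem:reachability} then gives at most $N-1$ swaps. For slides, keep $z$ fixed as the unique color-$(q+1)$ vertex and simulate each transposition by three slides through $z$, exactly as in \Cref{thm:diameter}. For soundness, the histogram is preserved and $n\ge2$, so $q+1$ is the only color of multiplicity one. A universal vertex's color is unique in a proper coloring, so $z$ keeps color $q+1$ and $G$ is properly colored with $[q]$.

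The main obstacle is the bookkeeping that keeps the base uniform. The additive overhead in the vertex cover must be a constant independent of $q$; here it is $+1$, and the construction is polynomial for each fixed $q$. The argument also depends on how the Jaffke--Jansen statement is formulated. If their hardness is stated for $q$ as a function of some other quantity, I would check that the reduction preserves the relevant range of $q$; the shift $q\mapsto q+1$ is harmless there.
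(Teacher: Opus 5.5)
Your proposal is correct and follows essentially the paper's route. The flip reduction is identical. For swaps and slides you likewise add $n$ reserve vertices per color and a universal vertex $z$ of the extra color $q+1$, whose uniqueness is forced by the preserved histogram, so $\VC(G')\le\VC(G)+1$. The paper differs only in that, for swaps, it keeps the reserves isolated, uses $q$ colors and budget $n$, and adds $z$ only for slides with budget $3n$; reusing the cone for both models with the generous budgets from \Cref{lem:reachability}, as you do, is equally valid.
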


\begin{proof}
Let $(G,q)$ be a $q$-Coloring instance on $n$ vertices. Assume $q<n$, since the remaining instances are trivial. For flips, use $G$, the all-$1$ coloring, and budget $n$. The discovery instance is feasible exactly when $G$ is $q$-colorable, and its vertex-cover number is unchanged.

For swaps, add $n$ isolated reserve vertices of each color in $[q]$, color $G$ entirely with color $1$, and set the budget to $n$. Given a proper coloring of $G$, process its vertices one by one and swap every incorrectly colored vertex with an unused reserve vertex of its desired color. The reverse direction follows by restricting any proper target to $G$. Isolated vertices do not change the vertex-cover number.

For slides, start from the swap construction, add a universal vertex $z$ of color~$q+1$, and use budget $3n$. Each reserve swap is simulated through $z$ by three slides. 
In the reverse direction, every original color has multiplicity at least~$n\ge2$, whereas color~$q+1$ is unique. Properness therefore forces $z$ to retain color~$q+1$, and the restriction to $G$ is a $q$-coloring. Adding $z$ increases the vertex-cover number by at most one. The constructed size is polynomial because $q<n$. A uniform-base discovery algorithm would consequently contradict the cited lower bound.
\end{proof}

\section{Conclusion}

The three modification step models lead to different complexity results. Under flips, the vertices outside a small modulator can be treated independently for calculating the number of required color flips.
Under swaps, the size of color classes is preserved and a cycle decomposition of the color-transition multigraph permits calculating the exact swap distance through our ILP.
Under slides, transition counts no longer determine the relevant distance, since the positions of the colors inside the graph matter as well.
In the other direction, the common diameter-two reduction shows that even radius one and the presence of a dominating vertex do not by themselves make the repair task easy.

Several natural questions remain open. The first is whether \SlideCD is
fixed-parameter tractable for $\VC(G)+q$. We also leave open if the number of colors be removed from the parameter of the swap algorithm. Finally, it is worthwhile to establish whether the flip algorithm for distance to complete can be extended to the distance to cluster (that is, the disjoint union of cliques), where
several cliques may each use the same color once.

\label{main:last}
\clearpage
\bibliographystyle{splncs04}
\bibliography{references}

\clearpage
\appendix
\renewcommand{\theHsection}{A\arabic{section}}
\crefalias{section}{appendix}
\crefalias{subsection}{appendix}
\section{Detailed Algorithmic Proofs}\label{app:algorithms}

\subsection{The vertex-cover assignment algorithm}\label{app:flip-vc}

We give the full correctness and running-time argument for
\Cref{thm:flip-vc}.
We can assume that $q \leq n$.
Let~$\varphi'$ be a coloring realizing a minimum number of flips and suppose~$\varphi$ and~$\varphi'$ use $n < p \leq q$ distinct colors in total.
If~$s$ is the number of colors used by both~$\varphi$ and~$\varphi'$, and~$\varphi'$ uses~$r$ colors unused by~$\varphi$, then $s + r \leq n$ as there are only~$n$ vertices to color.
We can then injectively map the~$r$ $\varphi'$-only colors first to the $\varphi$-only colors and then to colors not used by $\varphi$ if necessary and thereby ensure the total number of colors used is at most~$n$.
This does not increase the number of color flips as all vertices colored with a $\varphi'$-only color needed to be recolored anyway. 
We can therefore assume that the number of available colors is at most~$n$.
Let $C$ be a minimum vertex cover of size $k$ and
let $I=V(G)\setminus C$. A target coloring induces a partition
$\mathcal P$ of $C$ into its nonempty color classes. Each block is
independent, different blocks receive different color names, and
$|\mathcal P|\le q$. Conversely, every such partition together with an
injective map $\pi:\mathcal P\to[q]$ specifies a proper coloring on $C$.

Fix an independent-block partition $\mathcal P$. For $x\in I$, let
$\mathcal N_{\mathcal P}(x)$ be the set of blocks $P$ with $N(x) \cap P \neq \emptyset$. Since
$I$ is independent, an injection $\pi$ extends to a proper coloring of $G$
if and only if $|\mathcal N_{\mathcal P}(x)|<q$ for every $x\in I$. The
condition is independent of the actual color names: the neighbors of $x$
use exactly $|\mathcal N_{\mathcal P}(x)|$ distinct colors. If this number
is smaller than $q$, at least one color remains available for~$x$.

\begin{lemma}\label{lem:fixed-partition-cost}
For a fixed admissible partition $\mathcal P$ and injection
$\pi:\mathcal P\to[q]$, the minimum number of flips among all extensions is
$\sum_{P\in\mathcal P}w(P,\pi(P))$, where
$w(P,c)=|P\setminus\varphi^{-1}(c)|
+|\{x\in I:\varphi(x)=c,\ N(x)\cap P\ne\emptyset\}|$.
\end{lemma}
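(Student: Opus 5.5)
We prove the equality by a lower bound and a matching construction. Fix the admissible partition $\mathcal P$ and the injection $\pi$. An \emph{extension} is a proper coloring $\psi$ of $G$ with $\psi(v)=\pi(P)$ for all $v\in P\in\mathcal P$. The number of flips needed to reach $\psi$ from $\varphi$ is $|\{v:\psi(v)\ne\varphi(v)\}|$, because each vertex whose color differs must be flipped at least once, and one flip per such vertex suffices. So we count changed vertices, splitting them into those in $C$ and those in $I$.

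\textbf{Lower bound.} Take any extension $\psi$. The vertices of $C$ that change are exactly those $v\in P$ with $\varphi(v)\ne\pi(P)$; summed over the blocks, these number $\sum_P|P\setminus\varphi^{-1}(\pi(P))|$. Now take $x\in I$ with $\varphi(x)=\pi(P)$ for some block $P$ that contains a neighbor of $x$. Since $\psi$ is proper, $\psi(x)\ne\pi(P)=\varphi(x)$, so $x$ must change.

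We next check that each such $x$ is counted in exactly one term $w(P,\pi(P))$. The second summand of $w(P,\pi(P))$ counts $x$ only when $\varphi(x)=\pi(P)$. Because $\pi$ is injective, at most one block $P$ has $\pi(P)=\varphi(x)$, so there is no double counting across blocks. Hence every extension changes at least $\sum_P w(P,\pi(P))$ vertices.

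\textbf{Upper bound.} Define $\psi$ as follows. On $C$, set $\psi(v)=\pi(P)$ for $v\in P$. For $x\in I$, keep $\psi(x)=\varphi(x)$ unless $\varphi(x)\in\pi(\mathcal N_{\mathcal P}(x))$. In that case, choose any color outside $\pi(\mathcal N_{\mathcal P}(x))$; one exists because admissibility gives $|\mathcal N_{\mathcal P}(x)|<q$ and $\pi$ is injective.

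We verify that $\psi$ is proper.
\begin{itemize}
\item Edges inside $C$ join different blocks, since every block is independent. Different blocks receive different colors because $\pi$ is injective.
\item Every other edge joins some $x\in I$ to a vertex of $C$, since $I$ is independent. By construction, $\psi(x)$ avoids all colors used on its neighbors in $C$.
\end{itemize}
The vertices that change under $\psi$ are precisely those counted in the lower bound, so $\psi$ attains the sum.

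The only delicate point is the exactly-once counting of vertices of $I$, and it rests entirely on the injectivity of $\pi$. A secondary remark is that flips act per vertex, so the flip count equals the Hamming distance between $\varphi$ and $\psi$.
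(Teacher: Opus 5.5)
Your proof is correct and follows essentially the same route as the paper. Both arguments lower-bound the cost by counting forced changes on $C$ and on $I$, using injectivity of $\pi$ to rule out double counting, and then match it with an extension that keeps every uncounted vertex of $I$ and recolors each counted one with a color its neighboring blocks avoid (such a color exists by admissibility); you also make explicit two points the paper leaves implicit, namely that the flip count equals the Hamming distance and that the constructed coloring is proper edge by edge.
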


\begin{proof}
The first term counts exactly the vertices of $C$ whose color changes.
Consider $x\in I$ and put $a=\varphi(x)$. The vertex $x$ can keep its color
if and only if no neighboring block is assigned $a$. Since $\pi$ is
injective, at most one block $P$ satisfies $\pi(P)=a$. Hence, if $x$ has to
change, it is counted exactly once by the second term of
$w(P,\pi(P))$. If it does not have to change, it is not counted.

It remains to see that all counted changes can be made simultaneously. For
every uncounted vertex of $I$, retain its initial color. For every counted
vertex~$x$, choose any color not used by the blocks in
$\mathcal N_{\mathcal P}(x)$. Such a color exists by admissibility.
Choices for different vertices of $I$ cannot conflict because $I$ is
independent. Thus the lower bound given by the sum is attained.
\end{proof}

The optimal injection $\pi$ is a rectangular assignment problem with left
side~$\mathcal P$, right side $[q]$, and edge costs $w(P,c)$. If desired,
one may add $q-|\mathcal P|$ dummy block vertices with zero-cost incident
edges and solve a square minimum-weight perfect matching instance. For one
partition, all neighborhood sets and all second cost terms are computed by
scanning the edges once while ensuring that no vertex $x \in I$ is counted more than once for any fixed assignment of color $c$ to an adjacent block $P$. The first cost terms take $\cO(kq)$ time, and the
assignment takes $\cO(k^2q)$ time. There are at most \mbox{$(k+1)^k=2^{\cO(k\log k)}$} partitions of $C$.
Testing independence and the extension condition as well as building and solving the assignment problem takes $\cO(m+k^2q)$ time per partition.

A minimum vertex cover can be found by an iterative branching algorithm. Up to depth $k$, this costs
$\cO(2^k(n+m))$ time. We have $q\leq n$. We therefore obtain a total running time of
\[
\cO\bigl(2^k(n+m)+(k+1)^k(m+k^2q)\bigr)
=
2^{\cO(k\log k)}(n+m).
\]

Since the palette preprocessing reduces $q$ to at most $n$, the algorithm
is fixed-parameter tractable in $k$ and either computes the exact optimum
or reports that no proper $q$-coloring exists.

\subsection{The distance-to-complete algorithm}\label{app:flip-dtc}

We now give further details for \Cref{thm:flip-dtc}. Let $S \subseteq V(G)$ be a
minimum set such that $C= V(G) \setminus S$ is a clique. It can be found as a
minimum vertex cover of the complement graph in
$\cO(2^kn^2)$ time where $k = |S|$ . Two vertices of $C$ have the same \emph{type} if they
have the same neighborhood in $S$. Let $\mathcal R$ denote the resulting
partition of~$C$. 
Clearly $|\mathcal R|\le2^k$.

Fix an independent-block partition $\mathcal P$ of $S$. In any proper
coloring, distinct blocks of $\mathcal P$ receive distinct colors and all
vertices of the clique $C$ receive distinct colors. A color used by a block
$P$ is either absent from $C$, or it appears on exactly one clique vertex.
In the latter case that vertex must belong to a type~$R$ anticomplete to
$P$. We therefore guess a map
$\sigma:\mathcal P\to\mathcal R\cup\{\bot\}$. The value $\bot$ says that
the color of $P$ is absent from $C$. 
The value~$R$ says that one vertex of~$R$ reuses this color.

Put $a_R=|\sigma^{-1}(R)|$ and
$s=|\{P\in\mathcal P:\sigma(P)\ne\bot\}|$. A guess is feasible only if
every $P$ with $\sigma(P)=R$ is anticomplete to $R$, if $a_R\le|R|$ for
every type, and if $|C|+|\mathcal P|-s\le q$. The last expression is the
number of distinct colors required: without reuse, the clique and the
blocks use disjoint color sets. Each non-$\bot$ value saves exactly one
color.

\begin{figure}[!ht]
\centering
\resizebox{\linewidth}{!}{\input{img/flip-dtc}}
\caption{A guess $\sigma$ for the distance-to-complete algorithm. For each block $P$ of the partition of $S$, the equality $\sigma(P)=R$ means that $R$ is anticomplete to $P$ and that one vertex of $R$ reuses the color assigned to $P$; the equality $\sigma(P)=\bot$ means that this color is absent from $C$. Solid arrows show one guess and dashed arrows other admissible choices.}
\label{fig:flip-dtc}
\end{figure}

For a feasible guess, create one assignment node for each block
$P\in\mathcal P$. If $\sigma(P)=R$, this node represents both $P$ and the
one clique vertex of type $R$ that reuses its color. In addition, create
$|R|-a_R$ clique-only nodes for each type $R$. The total number of nodes is
$|C|+|\mathcal P|-s$, so the feasibility test ensures that they can be
matched injectively to colors in $[q]$.

The cost of assigning color $c$ to a block node $P$ is
$|P\setminus\varphi^{-1}(c)|$. If $\sigma(P)=R$, add zero when $R$
contains an initially $c$-colored vertex and add one otherwise. The cost of
assigning $c$ to a clique-only node of type $R$ is zero when $c$ already
occurs in $R$ and one otherwise. A minimum-cost injection can again be
found by weighted bipartite matching.

\begin{lemma}\label{lem:dtc-realizable}
For a fixed feasible pair $(\mathcal P,\sigma)$ and a fixed injective
assignment of color names, the sum of the above costs is exactly the minimum
number of flips realizing that assignment.
\end{lemma}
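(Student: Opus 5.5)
The proof splits into a lower bound and an upper bound. Throughout, fix the feasible pair $(\mathcal P,\sigma)$ and an injective assignment $\pi$ of color names to all nodes: block nodes and clique-only nodes. This determines which color each block receives. It also determines, for each type $R$, the multiset of colors $A_R$ placed on $R$. This multiset consists of the colors $\pi(P)$ for the $a_R$ blocks with $\sigma(P)=R$, together with the colors of the $|R|-a_R$ clique-only nodes of type $R$. Since $\pi$ is injective, these are $|R|$ distinct colors.

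A realization is any proper coloring $\psi$ that:
\begin{itemize}
\item colors every block $P$ with $\pi(P)$,
\item uses exactly the color set $A_R$ on each type $R$, bijectively.
\end{itemize}

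\emph{Properness of realizations.} I first check that every such $\psi$ is proper.
\begin{itemize}
\item Blocks are independent and receive distinct colors.
\item Clique vertices receive pairwise distinct colors, because the sets $A_R$ are disjoint by injectivity.
\item A block $P$ and a clique vertex $v$ of type $R$ share a color only if $\pi(P)\in A_R$. Injectivity then forces $\sigma(P)=R$, and feasibility says that $P$ is anticomplete to $R$.
\end{itemize}
So "realizing the assignment" is exactly the choice of these bijections $R\to A_R$. The flip count then splits into two independent parts: $\sum_P|P\setminus\varphi^{-1}(\pi(P))|$ on $S$, plus, for each type $R$, the number of vertices $v\in R$ whose new color differs from $\varphi(v)$.

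\emph{Lower bound, per type.} The clique-vertex flips on $R$ are at least $|A_R|-|A_R\cap\varphi(R)|$. The reason is that each color $c\in A_R$ lands on exactly one vertex of $R$. That vertex is unchanged only if $\varphi(v)=c$, which requires $c\in\varphi(R)$. Distinct colors $c$ land on distinct vertices, so at most $|A_R\cap\varphi(R)|$ vertices keep their color. The right-hand side is precisely the sum of the per-node costs charged for type $R$: each block node with $\sigma(P)=R$ and each clique-only node of type $R$ pays $1$ exactly when its color does not occur initially in $R$.

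\emph{Upper bound, per type.} For each $c\in A_R\cap\varphi(R)$, pick one representative vertex of $R$ initially colored $c$. Representatives are distinct because the colors are distinct, and each keeps its color. Map the remaining colors of $A_R$ bijectively onto the remaining vertices of $R$; the counts agree since $|A_R|=|R|$. The number of changed vertices is then at most $|R|-|A_R\cap\varphi(R)|$, which matches the lower bound.

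\emph{Combining.} Summing over the blocks and over the types gives the lemma: the total cost equals the minimum number of flips over all realizations.

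The main subtlety is the counting convention for a color $c$ that occurs initially on several vertices of $R$. Only one of those vertices can keep $c$, so the cost must be charged per assigned color and not per initially colored vertex. The argument handles this by selecting a single representative for each color in $A_R\cap\varphi(R)$.
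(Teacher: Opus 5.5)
Your proposal is correct and follows essentially the same route as the paper's proof. Both arguments charge the block terms directly on $S$ and obtain the per-type upper bound by keeping one initially $c$-colored representative for each assigned color $c$ already present in $R$ and distributing the remaining colors bijectively. Both get the matching lower bound because each assigned color lies on exactly one clique vertex, and both verify properness via injectivity together with the anticompleteness of $\sigma(P)$ to $P$.
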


\begin{proof}
The block terms plainly count the forced changes in $S$. Fix a type $R$ and
let~$T_R$ be the set of target colors assigned to its vertices, including
colors represented by block nodes mapped to $R$. For every $c\in T_R$ that
already occurs in~$R$, retain one initially $c$-colored vertex. These chosen
vertices are distinct because the colors in $T_R$ are distinct. Match the
remaining target colors bijectively to the remaining vertices of $R$. Each
such vertex changes color. Hence exactly one vertex can be kept for every
target color already present in $R$, and all other vertices change. This is
precisely the cost charged by the nodes of type $R$.

Conversely, in any realization, at most one vertex of $R$ can keep color
$c$ because the target is proper on the clique. Such a vertex can exist only
when $c$ occurs initially in $R$. Thus, no realization has smaller cost.
The construction is independent for the different types and therefore
realizes all zero-cost representatives simultaneously.

The resulting target coloring is proper. Every block of $\mathcal P$ is independent, distinct assignment nodes receive distinct colors, and all vertices of $C$ correspond to distinct assignment nodes. The colors shared between $S$ and $C$ are precisely those assigned to blocks $P$ with $\sigma(P)\ne\bot$. Each such color is shared only by~$P$ and the clique vertex represented by the same block node; that vertex lies in~$\sigma(P)$, which is anticomplete to $P$ by feasibility.
\end{proof}

There are at most $(k+1)^k \in k^{\cO(k)}$ partitions $\mathcal P$,
and for each partition there are at most
$(2^k+1)^{|\mathcal P|}\le (2^k+1)^k$ maps~$\sigma$.
For each feasible guess, the assignment instance has at most~$n$
vertices and, after the palette preprocessing, at most $n$ color names. It can therefore be solved in $\cO(n^3)$ time. Together with the computation of $S$, and using
$n^2\le n^3$, $k^{\cO(k)}=2^{\cO(k\log k)}$, and
$2^k+1\le 2^{k+1}$, the running time is
\begin{align*}
\cO\bigl(2^kn^2+k^{\cO(k)}(2^k+1)^kn^3\bigr)
&\le
\cO\bigl((2^k+2^{\cO(k\log k)}\cdot 2^{k(k+1)})n^3\bigr)\\
&=
\cO\bigl((2^k+2^{\cO(k^2+k\log k+k)})n^3\bigr)\\
&=
2^{\cO(k^2)}n^3\bigr.
\end{align*}

\section{The Color-Cycle ILP in Detail}\label{app:ilp}

We now prove the equivalence used in \Cref{thm:swap-vc}. Fix \mbox{$\gamma:C\to[q]$} such that $\gamma$ is a proper coloring. For a neighborhood class $R\subseteq I$, let $A_{R,a}=|R\cap\varphi^{-1}(a)|$. The ILP contains the supply equations $\sum_c x_{R,a,c}=A_{R,a}$ and the nonnegativity constraints \mbox{$x_{R,a,c}\in\Znonneg$}. If $c\in\gamma(N(R))$, it also contains $x_{R,a,c}=0$.

Define $T_{a,c}=|\{v\in C:\varphi(v)=a,\gamma(v)=c\}|+\sum_Rx_{R,a,c}$. For every directed simple color cycle $Z$, including each one-vertex loop, let $y_Z\in\Znonneg$. The arc equations are $\sum_{Z:(a,c)\in E(Z)}y_Z=T_{a,c}$ for all $a,c\in[q]$. Since a union of directed cycles is balanced at every color, these equations imply $\sum_cT_{a,c}=\sum_cT_{c,a}$. The left side is the initial multiplicity of $a$ and the right side is its target multiplicity. Thus swap reachability is enforced without a separate histogram constraint.

Given a feasible integer solution, assign final colors within each class $R$ according to the counts $x_{R,a,c}$. This is possible because the supply equations account for every vertex in each initial-color subclass. The forbidden-color equations make the resulting coloring proper across edges from $C$ to $I$. Properness on $C$ was checked when $\gamma$ was guessed, and $I$ has no edges. The $y$-variables decompose its transition multigraph, and the budget inequality plus \Cref{lem:cycle-distance} supplies a swap sequence of length at most $b$.

In the other direction, let a proper target reachable within at most~$b$ color swaps be given. Counting its transitions within every class defines the $x$-variables. Properness gives all forbidden-color equations. Equal histograms make the transition multigraph balanced, so its arcs decompose into directed simple cycles. Choose a decomposition that realizes the minimum swap distance. \Cref{lem:cycle-distance} shows that its $y$-variables satisfy the budget constraint. This proves both implications.

\enlargethispage{3\baselineskip}
There are at most $2^kq^2$ variables of type $x$ and $q+\sum_{\ell=2}^q\binom q\ell(\ell-1)!<e\,q!$ variables of type $y$. By \Cref{lem:reachability}, replace $b$ by $\min\{b,n-1\}$. Hence the right-hand sides are polynomial and the encoding length is bounded by $\cO\big(f(k+q) \cdot n^{\cO(1)}\big)$ for a computable function $f$. Across the $q^k$ choices for $\gamma$, Lenstra's algorithm proves fixed-parameter tractability.

\section{Complete Hardness Proofs}\label{app:hardness}

\subsection{List enforcement for flipping}\label{app:list}

We prove \Cref{thm:flip-hard} in full detail. 
Let $(G,q,L)$ be an instance of \LC, and let $C$ be a vertex cover of $G$ of size $k$. The empty graph is a trivial yes-instance and can be handled separately. We delete colors that occur in no list and relabel the remaining palette. If some vertex has an empty list, we return a fixed no-instance. Thus, $q$ is at most the explicit list-encoding length. Put $n=|V(G)|$ and $b=n$. Construct $H$ from a copy of $G$. Give every original vertex initial color~$1$. For every original vertex $v$ and every color $c\in[q]\setminus L(v)$, add $b+1$ pendent vertices adjacent only to $v$, all initially colored $c$. This is a polynomial construction with the same palette.

\begin{figure}[!ht]
\centering
\resizebox{0.95\linewidth}{!}{\input{img/flip-hard}}
\caption{The list-enforcement gadget, illustrated for $b=4$. For every
$c \in [q] \setminus L(v)$, the vertex $v$ receives $b+1$ pendent vertices initially
colored $c$. Dotted segments abbreviate omitted vertices.}
\label{fig:list-enforcement}
\end{figure}
\FloatBarrier

If $\psi$ is a proper list coloring of $G$, flip each original vertex whose initial color differs from $\psi(v)$ and leave all pendent vertices unchanged. At most $n=b$ flips are used. Edges inside the copy of $G$ are proper because $\psi$ is proper. If a pendent vertex at $v$ has color $c$, then $c\notin L(v)$ and hence $\psi(v)\ne c$, so its edge is proper as well.

Conversely, suppose a proper coloring of $H$ is reachable with at most $b$ flips. If an original vertex $v$ ends with a forbidden color $c$, each of its $b+1$ pendent vertices initially colored $c$ must end with another color. This already requires more than $b$ flips. Thus every original vertex receives a color from its list, and restriction to~$G$ is a proper list coloring.

It remains to bound the parameters. The graph $G-C$ is independent. After deleting $C$ from $H$, every original vertex outside $C$ is the center of a star of pendent vertices, while pendent vertices whose center belongs to $C$ become isolated. Hence $H-C$ is a forest and $\FVS(H)\le k$.

For treedepth, arrange the vertices of $C$ on one root-to-leaf chain. If $k>0$, make every original vertex outside $C$ a child of the last vertex of this chain. Every edge of $G$ has an endpoint in $C$, so its endpoints are comparable in the resulting rooted tree. Attach a pendent vertex of a vertex outside $C$ below that vertex. Attach a pendent vertex of a vertex in $C$ directly below its center. If $k=0$, instead take every original vertex as a root and hang its pendent vertices below it. All edges of $H$ then join comparable vertices, and the height is at most $k+2$. Thus $\TD(H)+\FVS(H)\le2k+2$, completing the parameterized reduction from the $\Wone$-hard \LC problem of Fiala et al.~\cite{fiala2011coloring}.

\subsection{Reductions from Equitable Coloring}\label{app:equitable}

Let $(G,q)$ be an \EC instance with $n$ vertices. We may assume $q\le n$, since instances with $q>n$ are trivially positive and can be mapped to a fixed yes-instance. Write $n=aq+r$, where $0\le r<q$. Choose any initial coloring~$\varphi$ in which colors $1,\ldots,r$ occur $a+1$ times and the remaining colors occur $a$ times.

\paragraph{Arbitrary swaps.}
Use the instance $(G,q,\varphi,n-1)$. If $G$ has an equitable coloring, rename its colors so that the $r$ large color classes receive names $1,\ldots,r$. It now has exactly the initial histogram and is reachable by \Cref{lem:reachability}. Conversely, swaps preserve every color-class size, so any reachable proper coloring is equitable. The graph, and hence both bandwidth and distance to disjoint paths, is unchanged. The known hardness of \EC proves the two claims for arbitrary swaps.

\paragraph{A cone for distance to disjoint paths.}
Create $H$ by adding a universal vertex~$z$ to $G$, use the palette $[q+1]$, and set $\varphi(z)=q+1$. For swaps use budget~$n$. 
For slides use budget $3n$. If $\psi$ is an equitable coloring of $G$, rename it to the initial histogram and keep color $q+1$ on $z$. Choose a token matching that fixes~$z$. The induced permutation on $V(G)$ admits a factorization into at most \mbox{$n-1$} transpositions, so the swap budget suffices. For slides, perform every such transposition of \mbox{$u,v\in V(G)$} by $(u,z),(z,v),(u,z)$. This restores the color of $z$, and $3n$ moves suffice.

For the reverse implication, let $\psi$ be a reachable proper coloring of $H$. Its color histogram equals the initial one. Since $z$ is universal, $\psi(z)$ occurs exactly once. If $\psi(z)=q+1$, restriction to $G$ is immediately an equitable $q$-coloring. Otherwise, let $\psi(z)=i\in[q]$. Then the initial multiplicity of $i$ was one, and the unique token of color $q+1$ lies in $G$. Exchanging the names $i$ and $q+1$ in $\psi$ gives a proper coloring with $z$ colored $q+1$ whose restriction to $G$ has precisely the original equitable multiplicities. Finally, if $S$ is a modulator of $G$ to disjoint paths, then $S\cup\{z\}$ is such a modulator of $H$. Hence $\DDP(H)\le\DDP(G)+1$, the new palette has size $q+1$, and $z$ is dominating.

\paragraph{Sliding with bounded bandwidth.}
For slides, we make the construction connected as follows. For each $j\in[n]$, add a clique $C_j$ of size $q+1$ and initially use every color in $[q+1]$ once on it. For $j<n$, join the color-$(q+1)$ vertex of $C_j$ to the color-$1$ vertex of $C_{j+1}$. Let $G_1,\ldots,G_s$ be the connected components of~$G$ in an arbitrary fixed order, choose a connector $v_i\in V(G_i)$, and put $r(i)=\sum_{h\le i}|V(G_h)|$. Join~$v_i$ to the color-$(q+1)$ vertex of $C_{r(i)}$. The initial coloring on $G$ is $\varphi$ constructed as above, and no vertex of $G$ has color $q+1$. Let $N=|V(H)|=n(q+2)$ and set the budget to~$2N^2$.

If $G$ has an equitable coloring $\psi$, rename its colors to match the histogram of~$\varphi$ and leave every clique unchanged. The resulting coloring of $H$ is proper: each clique is rainbow, consecutive clique endpoints have colors $q+1$ and $1$, and every connector in $G$ uses a color in $[q]$. It has the same global histogram as the initial coloring. The graph $H$ is connected, so by \Cref{lem:reachability} it can be reached with at most $(N-1)(2\diam(H)-1)<2N^2$ slides.

Conversely, slides preserve the global histogram, and because $H$ is connected every target with equal histogram is slide-reachable by~\Cref{lem:reachability}. In every proper target, each $(q+1)$-clique uses all $q+1$ colors once. The $n$ cliques therefore consume exactly $n$ copies of every color. What remains on $G$ is exactly the initial equitable histogram on $[q]$ and no copy of color $q+1$. Restriction to~$G$ is consequently an equitable $q$-coloring.

We finish by bounding bandwidth. Let $B=\BW(G)$. For each component, take the relative order induced by a bandwidth-$B$ layout and concatenate the component orders in the fixed order $G_1,\ldots,G_s$. This does not increase the length of any edge. Fold the order of $G_i$ so that its connector $v_i$ becomes last. 
Write the old component order as $x_1,\ldots,x_a,v_i,y_1,\ldots,y_d$. If $a\ge d$, first output $x_1,\ldots,x_{a-d}$, then, for $h=1,\ldots,d$, output $x_{a-d+h},y_{d-h+1}$, and finally output~$v_i$. 
If $d>a$, first output $y_d,\ldots,y_{a+1}$, then, for $h=1,\ldots,a$, output $y_{a-h+1},x_h$, and finally output $v_i$. This also covers $a=0$ or $d=0$. Every two consecutive vertices in the old order that lie within the same connected component are now at distance at most two, so every pairwise order distance, and in particular every edge length, grows by at most a factor of two.

Number the resulting order of $G$ by $1,\ldots,n$. Put its $j$th vertex at position $(q+2)j$. Put the $q+1$ vertices of $C_j$ in positions $(q+2)(j-1)+1,\ldots,(q+2)j-1$, with its color-$(q+1)$ vertex last and its color-$1$ vertex first. Clique edges have length at most $q$, chain edges have length two, and a component connector is adjacent to the clique vertex immediately before it. An original edge has length at most $2(q+2)B$. Therefore $\BW(H)\le2(q+2)(B+1)$, which completes the parameterized reduction.

\end{document}

%% file: img/flip-vc-paper.tex
\begin{tikzpicture}

% Input Graph
\node[text node] (text1) at (0,0) {(I) Input};
\node[small vertex, fill=green] (a) at ($(text1.west) + (1.5,-2)$) {};
\node[small vertex, fill=green] (b) at ($(a) + (0.5,0.5)$) {};
\node[small vertex, fill=red] (c) at ($(a) + (1,0)$) {};
\node[small vertex, fill=blue] (a1) at ($(a) + (-0.5,-0.5)$) {};
\node[small vertex, fill=red] (ab1) at ($(a) + (-0.5,1)$) {};
\node[small vertex, fill=red] (b1) at ($(b) + (0,0.75)$) {};
\node[small vertex, fill=blue] (bc1) at ($(b) + (0.5,0.5)$) {};
\node[small vertex, fill=blue] (bc2) at ($(c) + (0.5,0.5)$) {};
\node[small vertex, fill=red] (abc1) at ($(a) + (0.5,-0.5)$) {};
\node[small vertex, fill=green] (c1) at ($(c) + (0.25,-0.75)$) {};
\node[small vertex, fill=red] (c2) at ($(c) + (0.75,-0.25)$) {};

\draw[]{
    (a) -- (b)
    (b) -- (c)
    (a) -- (a1)
    (a) -- (ab1)
    (a) -- (abc1)
    (b) -- (ab1)
    (b) -- (abc1)
    (b) -- (b1)
    (b) -- (bc1)
    (b) -- (bc2)
    (c) -- (bc1)
    (c) -- (bc2)
    (c) -- (abc1)
    (c) -- (c1)
    (c) -- (c2)
};

% Highlight VC
\node[text node] (text2) at ($(text1.west) + (4.5,0)$) {(II) Compute Vertex Cover $C$};
\node[small vertex, double, fill=green] (a) at ($(text2.west) + (2,-2)$) {};
\node[small vertex, double, fill=green] (b) at ($(a) + (0.5,0.5)$) {};
\node[small vertex, double, fill=red] (c) at ($(a) + (1,0)$) {};
\node[small vertex, fill=blue] (a1) at ($(a) + (-0.5,-0.5)$) {};
\node[small vertex, fill=red] (ab1) at ($(a) + (-0.5,1)$) {};
\node[small vertex, fill=red] (b1) at ($(b) + (0,0.75)$) {};
\node[small vertex, fill=blue] (bc1) at ($(b) + (0.5,0.5)$) {};
\node[small vertex, fill=blue] (bc2) at ($(c) + (0.5,0.5)$) {};
\node[small vertex, fill=red] (abc1) at ($(a) + (0.5,-0.5)$) {};
\node[small vertex, fill=green] (c1) at ($(c) + (0.25,-0.75)$) {};
\node[small vertex, fill=red] (c2) at ($(c) + (0.75,-0.25)$) {};
\node[draw, dashed, rounded corners, anchor=west, label=left:{$C$}, minimum height=0.875cm, minimum width=1.5cm] (vc) at ($(a) + (-0.25,0.25)$) {};

\draw[]{
	(a) -- (b)
	(b) -- (c)
	(a) -- (a1)
	(a) -- (ab1)
	(a) -- (abc1)
	(b) -- (ab1)
	(b) -- (abc1)
	(b) -- (b1)
	(b) -- (bc1)
	(b) -- (bc2)
	(c) -- (bc1)
	(c) -- (bc2)
	(c) -- (abc1)
	(c) -- (c1)
	(c) -- (c2)
};

% Partition VC
\node[text node] (text3) at ($(text2.west) + (5.5,0)$) {(III) Guess Partition $\mathcal{P}$};
\node[small vertex, double, fill=green] (a) at ($(text3.west) + (1.5,-2)$) {};
\node[small vertex, double, fill=green] (b) at ($(a) + (0.5,0.5)$) {};
\node[small vertex, double, fill=red] (c) at ($(a) + (1,0)$) {};
\node[small vertex, fill=blue] (a1) at ($(a) + (-0.5,-0.5)$) {};
\node[small vertex, fill=red] (ab1) at ($(a) + (-0.5,1)$) {};
\node[small vertex, fill=red] (b1) at ($(b) + (0,0.75)$) {};
\node[small vertex, fill=blue] (bc1) at ($(b) + (0.5,0.5)$) {};
\node[small vertex, fill=blue] (bc2) at ($(c) + (0.5,0.5)$) {};
\node[small vertex, fill=red] (abc1) at ($(a) + (0.5,-0.5)$) {};
\node[small vertex, fill=green] (c1) at ($(c) + (0.25,-0.75)$) {};
\node[small vertex, fill=red] (c2) at ($(c) + (0.75,-0.25)$) {};
\begin{scope}[on background layer]
	\node[draw=orange, fill=orange!20, rectangle, rounded corners, anchor=west, label=left:{\scriptsize $P_1$}, minimum height=0.4cm, minimum width=1.5cm] (P1) at ($(a) + (-0.25,0)$) {};
	\node[draw=purple, fill=purple!20, rectangle, rounded corners, label=left:{\scriptsize $P_2$}, minimum height=0.4cm, minimum width=0.4cm] (P2) at ($(b) + (0,0)$) {};
\end{scope}

\draw[]{
	(a) -- (b)
	(b) -- (c)
	(a) -- (a1)
	(a) -- (ab1)
	(a) -- (abc1)
	(b) -- (ab1)
	(b) -- (abc1)
	(b) -- (b1)
	(b) -- (bc1)
	(b) -- (bc2)
	(c) -- (bc1)
	(c) -- (bc2)
	(c) -- (abc1)
	(c) -- (c1)
	(c) -- (c2)
};

% Assignment instance
\node[text node] (text4) at ($(text1.west) + (3,-4)$) {(IV) Build and Solve Assignment Instance};
\node[mwpm node, draw=orange, thick] (P1) at ($(text4.west) + (1,-1)$) {$P_1$};
\node[mwpm node, draw=purple, thick] (P2) at ($(P1) + (0,-2.5)$) {$P_2$};
\node[mwpm node, draw=black] (dummy) at ($(P2) + (0,-2.5)$) {$\ast$};
\node[mwpm node, fill=red] (r) at ($(P1) + (6,0)$) {\red};
\node[mwpm node, fill=green] (g) at ($(P2) + (6,0)$) {\green};
\node[mwpm node, fill=blue, text=white] (b) at ($(dummy) + (6,0)$) {\blue};

\draw{
    (P1) edge["$w(P_1{,}\red)=4$" pos=0.5, very thick] (r)
    (P1) edge["$w(P_1{,}\green)=2$" pos=0.2] (g)
    (P1) edge["$w(P_1{,}\blue)=5$" pos=0.8] (b)
    (P2) edge["$w(P_2{,}\red)=4$" pos=0.2] (r)
    (P2) edge["$w(P_2{,}\green)=0$" pos=0.8, very thick] (g)
    (P2) edge["$w(P_2{,}\blue)=3$" pos=0.2] (b)
    (dummy) edge["0" pos=0.8] (r)
    (dummy) edge["0" pos=0.2] (g)
    (dummy) edge["0" pos=0.5, very thick] (b)
};

\end{tikzpicture}

%% file: img/slide-bw.tex
\begin{tikzpicture}
        
% Main Graph
\node[anchor=west, draw, rectangle, rounded corners, fill=gray!20, minimum width=9cm, minimum height=1.2cm, label=left:{$G$}] (G) at (-0.5,-0.2) {};
\node[small vertex, fill=red] (a) at (0,0) {};
\node[small vertex, fill=red] (b) at (2,0) {};
\node[small vertex, fill=green] (c) at (4,0) {};
\node[small vertex, fill=green] (d) at (6,0) {};
\node[small vertex, fill=blue] (e) at (8,0) {};
\foreach \x / \y in {a/1,b/2,c/3,d/4,e/5}{
    \node[draw, rectangle, rounded corners, rotate=45, fill=brown!20, minimum size=1cm, label=above:{$C_\y$}] (C\y) at ($(\x) + (0,1.5)$) {};
}

% Cliques
\foreach \x in {a,b,c,d,e}{
    \node[small vertex, fill=red] (\x1) at ($(\x) + (-0.5,1.5)$) {};
    \node[small vertex, fill=green] (\x2) at ($(\x) + (0,2)$) {};
    \node[small vertex, fill=yellow] (\x3) at ($(\x) + (0,1)$) {};
    \node[small vertex, fill=blue] (\x4) at ($(\x) + (0.5,1.5)$) {};
}

\draw[thick]{
    (a) -- (b)
    (a) edge[bend right] (c)
    (b) -- (c)
    (d) -- (e)
    (a3) -- (b1)
    (b3) -- (c1)
    (c3) -- (d1)
    (d3) -- (e1)
    (c) -- (c3)
    (e) -- (e3)
};

\foreach \x in {a,b,c,d,e}
    \foreach \y in {1,2,3,4}
        \foreach \z in {4,3,2,1}{
            \draw[thick] (\x\y) -- (\x\z);
            \ifnum \y = \z-1
                \breakforeach
            \fi
        }

\end{tikzpicture}

%% file: img/flip-dtc.tex
\begin{tikzpicture}

% Complete Subgraph
\node[draw, ellipse, label={[shift={(-4.5,-0.5)}]{\large $C$}}, minimum width=13.5cm, minimum height=4cm, anchor=west] (C) at (0,0) {};
\node[draw, ellipse, fill=gray!20, label={[shift={(0.75,-0.75)}]{$R_1$}}, minimum width=2.5cm, minimum height=1.5cm, anchor=west] (R1) at ($(C.west) + (0.5,0)$) {};
\node[draw, ellipse, fill=gray!20,, label={[shift={(0.75,-0.75)}]:{$R_2$}}, minimum width=2.5cm, minimum height=1.5cm, anchor=west] (R2) at ($(C.west) + (3,-1)$) {};
\node[draw, ellipse, fill=gray!20,, label={[shift={(0.75,-0.75)}]:{$R_3$}}, minimum width=2.5cm, minimum height=1.5cm, anchor=west] (R3) at ($(C.west) + (5.5,1)$) {};
\node[draw, ellipse, fill=gray!20, label={[shift={(-0.75,-0.75)}]{$R_i$}}, minimum width=2.5cm, minimum height=1.5cm, anchor=west] (R4) at ($(C.west) + (10.5,0)$) {};

\node[small vertex, fill] (1a) at ($(R1.west) + (0.5,0)$) {};
\node[small vertex, fill] (1b) at ($(1a) + (0.5,-0.35)$) {};
\node[small vertex, fill] (1c) at ($(1a) + (0.55,0.25)$) {};

\node[small vertex, fill] (2a) at ($(R2.west) + (0.5,0)$) {};
\node[small vertex, fill] (2b) at ($(2a) + (0.4,-0.4)$) {};
\node[small vertex, fill] (2c) at ($(2a) + (0.4,0.4)$) {};
\node[small vertex, fill] (2d) at ($(2a) + (0.8,0)$) {};

\node[small vertex, fill] (3a) at ($(R3.west) + (0.5,0)$) {};
\node[small vertex, fill] (3b) at ($(3a) + (0.5,-0.35)$) {};
\node[small vertex, fill] (3c) at ($(3a) + (0.55,0.25)$) {};

\node[small vertex, fill] (4a) at ($(R4.east) + (-0.5,0)$) {};
\node[small vertex, fill] (4b) at ($(4a) + (-0.5,0)$) {};

\foreach \x in {1,2,3,4}{
    \draw (\x a) -- (\x b);
}
\foreach \x in {1,2,3}
    \foreach \a in {a,b,c}{
        \draw (\x \a) -- (\x c);
}

\foreach \a in {a,b,c}{
    \draw (2\a) -- (2d);
}

\draw{
    (R1) -- (R2)
    (R1) -- (R3)
    (R1) edge[bend right=1.5mm] (R4)
    (R2) -- (R3)
    (R2) edge[bend right=2mm] (R4)
    (R3) -- (R4)
};

% Dots
\draw[line width=1mm, line cap=round, dash pattern=on 0pt off 2.5mm] ($(R4.west) + (-1.5,0)$) -- ($(R4.west) + (-2,0)$);

% Partition Blocks
\node[draw, rounded corners, label=left:{\large $S$}, minimum width=7cm, minimum height=2cm, anchor=west] (S) at ($(C.west) + (1.5,-6)$) {};
\node[draw, rounded corners, fill=red!20, label={[shift={(0,0.6)}]below:{$P_1$}}, minimum width=2cm, minimum height=1.5cm, anchor=north west] (P1) at ($(S.north west) + (0.25,-0.25)$) {};
\node[draw, rounded corners, fill=blue!20, label={[shift={(0,0.6)}]below:{$P_j$}}, minimum width=2cm, minimum height=1.5cm, anchor=west] (P2) at ($(P1.east) + (2.5,0)$) {};

\node[small vertex, fill] (1a) at ($(P1.north west) + (0.75,-0.5)$) {};
\node[small vertex, fill] (1b) at ($(1a) + (0.5,-0.35)$) {};
\node[small vertex, fill] (1c) at ($(1a) + (0.55,0.25)$) {};

\node[small vertex, fill] (1a) at ($(P2.north west) + (0.75,-0.5)$) {};
\node[small vertex, fill] (1b) at ($(1a) + (0.5,-0.35)$) {};

\draw[dashed]{
    ($(P1.south east) + (0,0.25)$) edge[thick,dotted] ($(P2.south west) + (0,0.25)$)
    (P1) -- (R1)
    (P1) -- (R2)
    (P2) -- (R2)
    (P2) -- (R3)
};

% Dots
\draw[line width=1mm, line cap=round, dash pattern=on 0pt off 2.5mm] ($(P1.east) + (1,0)$) -- ($(P2.west) + (-1,0)$);

% No Adjacency Class
\node[draw=red, circle, line width=2mm, minimum size=1cm] (em) at ($(S.east) + (3,0)$) {$\bot$};

\draw[ultra thick, gray, -Stealth]{
	(P2) edge[dashed] (R4)
	(P2) edge (em)
	(P2) edge[dashed] (R1)
	(P1) edge[bend right, dashed] (em)
	(P1) edge[bend right=0.5cm, dashed] (R3)
	(P1) edge[bend right=0.3cm] (R4)
};

\end{tikzpicture}

%% file: img/flip-hard.tex
\begin{tikzpicture}

\node (text1) at (2,-1) {\textsc{List Coloring}};
\node (text2) at ($(text1) + (6,0)$) {\textsc{Coloring Discovery}};

\node[vertex, minimum size=4mm, label={$\{\red,\blue\}$}] (u) at (0,2) {$u$};
\node[vertex, minimum size=4mm, label={$\{\red,\green,\blue\}$}] (v) at (2,2) {$v$};
\node[vertex, minimum size=4mm, label={$\{\green\}$}] (w) at (4,2) {$w$};
\node[vertex, minimum size=4mm, label=below:{$\{\green,\blue\}$}] (x) at (4,0) {$x$};

\draw[thick]{
    (u) -- (v)
    (v) -- (w)
    (v) -- (x)
    (w) -- (x)
};

\node[small vertex, fill=red, label={[label distance=-0.5mm]below:{\scriptsize $u$}}] (u) at ($(u) + (6,0)$) {};
\node[small vertex, fill=red, label={[label distance=-0.5mm]:{\scriptsize $v$}}] (v) at ($(v) + (6,0)$) {};
\node[small vertex, fill=red, label={[label distance=-0.5mm]below left:{\scriptsize $w$}}] (w) at ($(w) + (6,0)$) {};
\node[small vertex, fill=red, label={[label distance=-0.5mm]below:{\scriptsize $x$}}] (x) at ($(x) + (6,0)$) {};

\node[small vertex, fill=green, label={[label distance=-1mm]:{\scriptsize $u_1^\green$}}] (u1g) at ($(u) + (-0.5,1)$) {};
\node[small vertex, fill=green, label={[label distance=-1mm]:{\scriptsize $u_{5}^\green$}}] (ubg) at ($(u) + (0.5,1)$) {};
\node[small vertex, fill=red, label={[label distance=-1mm]:{\scriptsize $w_1^\red$}}] (w1r) at ($(w) + (-0.5,1)$) {};
\node[small vertex, fill=red, label={[label distance=-1mm]:{\scriptsize $w_{5}^\red$}}] (wbr) at ($(w) + (0.5,1)$) {};
\node[small vertex, fill=blue, label={[label distance=-1mm]right:{\scriptsize $w_1^\blue$}}] (w1b) at ($(w) + (1,0.5)$) {};
\node[small vertex, fill=blue, label={[label distance=-1mm]right:{\scriptsize $w_{5}^\blue$}}] (wbb) at ($(w) + (1,-0.5)$) {};
\node[small vertex, fill=red, label={[label distance=-1mm]right:{\scriptsize $x_1^\red$}}] (x1r) at ($(x) + (1,0.5)$) {};
\node[small vertex, fill=red, label={[label distance=-1mm]right:{\scriptsize $x_5^\red$}}] (xbr) at ($(x) + (1,-0.5)$) {};

\draw[thick]{
    (u) -- (v)
    (v) -- (w)
    (v) -- (x)
    (w) -- (x)
    
};

\draw[]{
	(u) -- (u1g)
	(u) -- (ubg)
	(w) -- (w1r)
	(w) -- (wbr)
	(w) -- (w1b)
	(w) -- (wbb)
	(x) -- (x1r)
	(x) -- (xbr)
    (u) -- ($(u) + (-0.25, 0.9)$)
    (u) -- ($(u) + (0, 0.9)$)
    (u) -- ($(u) + (0.25, 0.9)$)
    (w) -- ($(w) + (0.9, 0.25)$)
    (w) -- ($(w) + (0.9,0)$)
    (w) -- ($(w) + (0.9,-0.25)$)
    (w) -- ($(w) + (-0.25, 0.9)$)
    (w) -- ($(w) + (0, 0.9)$)
    (w) -- ($(w) + (0.25, 0.9)$)
    (x) -- ($(x) + (0.9, 0.25)$)
    (x) -- ($(x) + (0.9,0)$)
    (x) -- ($(x) + (0.9,-0.25)$)
};

\draw[thick, dotted]{
    (u1g) -- (ubg)
    (w1r) -- (wbr)
    (w1b) -- (wbb)
    (x1r) -- (xbr)
};

\end{tikzpicture}